\def\mindex#1{\index{#1}}
\def\sq{\hbox{\rlap{$\sqcap$}$\sqcup$}}
\def\qed{\ifmmode\sq\else{\unskip\nobreak\hfil
\penalty50\hskip1em\null\nobreak\hfil\sq
\parfillskip=0pt\finalhyphendemerits=0\endgraf}\fi\medskip}
\long\def\defbox#1{\framebox[.9\hsize][c]{\parbox{.85\hsize}{%
\parindent=0pt
\baselineskip=12pt plus .1pt      
\parskip=6pt plus 1.5pt minus 1pt 
 #1}}}
\long\def\beginbox#1\endbox{\subsection*{}%
\hbox{\hspace{.05\hsize}\defbox{\medskip#1\bigskip}}%
\subsection*{}}
\def\endbox{}
\newsavebox{\junk}
\savebox{\junk}[1.6mm]{\hbox{$|\!|\!|$}}
\def\liminf{\mathop{\rm lim\ inf}}
\def\argmin{\mathop{\rm arg\, min}}
\def\argmax{\mathop{\rm arg\, max}}
\def\ystate{{\sf Y}}
\def\zstate{{\sf Z}}
\def\bfZ{{\bf Z}}
\def\bfmath#1{{\mathchoice{\mbox{\boldmath$#1$}}%
{\mbox{\boldmath$#1$}}%
{\mbox{\boldmath$\scriptstyle#1$}}%
{\mbox{\boldmath$\scriptscriptstyle#1$}}}}
\def\bfmY{\bfmath{Y}}
\def\bfmhhaY{\bfmath{\hhaY}} 
\def\bfmhhaY{\hbox to 0pt{$\widehat{\bfmY}$\hss}\widehat{\phantom{\raise 1.25pt\hbox{$\bfmY$}}}}
\def\til={{\widetilde =}}
\def\clG{{\cal G}}
\def\clH{{\cal H}}
\def\clI{{\cal I}}
\def\clK{{\cal K}}
\def\clM{{\cal M}}
\def\clN{{\cal N}}
\def\clP{{\cal P}}
\def\clS{{\cal S}}
\def\clU{{\cal U}}
\def\clV{{\cal V}}
 \def\FRAC#1#2#3{\genfrac{}{}{}{#1}{#2}{#3}}
\def\ddtp{{\mathchoice{\FRAC{1}{d^{\hbox to 2pt{\rm\tiny +\hss}}}{dt}}%
{\FRAC{1}{d^{\hbox to 2pt{\rm\tiny +\hss}}}{dt}}%
{\FRAC{3}{d^{\hbox to 2pt{\rm\tiny +\hss}}}{dt}}%
{\FRAC{3}{d^{\hbox to 2pt{\rm\tiny +\hss}}}{dt}}}}
\def\half{{\mathchoice{\FRAC{1}{1}{2}}%
{\FRAC{1}{1}{2}}%
{\FRAC{3}{1}{2}}%
{\FRAC{3}{1}{2}}}}
\def\Prob{{\sf P}}
\def\average#1,#2,{{1\over #2} \sum_{#1}^{#2}}
\def\eye(#1){{\bf(#1)}\quad}
\newtheorem{theorem}{Theorem}[section]
\newtheorem{lemma}[theorem]{Lemma}
\def\eq#1/{(\ref{e:#1})}
\newcommand{\beqn}[1]{\notes{#1}%
\begin{eqnarray} \elabel{#1}}
\newcommand{\eeqn}{\end{eqnarray} }
\newcommand{\beq}[1]{\notes{#1}%
\begin{equation}\elabel{#1}}
\newcommand{\eeq}{\end{equation}}
\def\bdes{\begin{description}}
\def\edes{\end{description}}
\newcounter{rmnum}
\newcounter{anum}
\def\ass(#1:#2){(#1\ref{#1:#2})}
\def\ritem#1{
\item[{\sf \ass(\current_model:#1)}]
}
\newenvironment{recall-ass}[1]{%
\begin{description}
\def\current_model{#1}}{
\end{description}
}
\newcommand{\bd}{\begin{description}}
\newcommand{\ed}{\end{description}}
\newcommand{\bt}{\begin{theorem}}
\newcommand{\et}{\end{theorem}}
\newcommand{\ba}{\begin{array}{rcl}}
\newcommand{\ea}{\end{array}}
\DeclareRobustCommand\widecheck[1]{{\mathpalette\@widecheck{#1}}}
\def\@widecheck#1#2{%
    \setbox\z@\hbox{\m@th$#1#2$}%
    \setbox\tw@\hbox{\m@th$#1%
       \widehat{%
          \vrule\@width\z@\@height\ht\z@
          \vrule\@height\z@\@width\wd\z@}$}%
    \dp\tw@-\ht\z@
    \@tempdima\ht\z@ \advance\@tempdima2\ht\tw@ \divide\@tempdima\thr@@
    \setbox\tw@\hbox{%
       \raise\@tempdima\hbox{\scalebox{1}[-1]{\lower\@tempdima\box
\tw@}}}%
    {\ooalign{\box\tw@ \cr \box\z@}}}
\def\bfZ{\mathbf{Z}}
\def\pz{\clP(\zstate)}
\def\wtilde{\widetilde}
\def\what{\widehat}
\def\ystate{\mathsf{Y}}
\def\py{\clP(\ystate)}
\def\Sym{\mathsf{Sym}}
\def\err{\mathsf{err}}
\def\rej{\mathsf{rej}}
\def\uc{\mathsf{uc}}
\def\match{\mathsf{M}}
\newcounter{prb}
\newenvironment{problem}{\begin{list}{{ \bf (P\arabic{prb}) \ }}{\usecounter{prb}
\setlength{\leftmargin}{14pt}
\setlength{\rightmargin}{12pt}
\setlength{\itemindent}{-25 pt}
\setlength{\labelsep}{-3 pt}
\setlength{\itemsep}{5 pt}
}}{\end{list}}
\def\Prb#1{\textbf{(P\ref{prb:#1})}}
\newtheorem{exmp}{Example}[section]
\begin{document}
\title{Asymptotically Optimal Matching of Multiple Sequences to Source Distributions and Training Sequences}
\author{
\IEEEauthorblockN{Jayakrishnan Unnikrishnan\\}
\IEEEauthorblockA{Audiovisual Communications Laboratory, School of Computer and Communication Sciences\\
Ecole Polytechnique F\'{e}d\'{e}rale de Lausanne (EPFL),
CH-1015 Lausanne, Switzerland\\
Email:  jay.unnikrishnan@epfl.ch}
\thanks{Portions of this paper were presented in part at the 51st Annual Allerton Conference on Communication, Control, and Computing, Monticello, Illinois, October 2013 \cite{unnnai13}.}
}
\maketitle
\thispagestyle{empty}
\pagestyle{empty}


\begin{abstract}
Consider a finite set of sources, each producing i.i.d. observations that follow a unique probability distribution on a finite alphabet.
We study the problem of matching a finite set of observed sequences to the set of sources under the constraint that the observed sequences are produced by distinct sources.
In general, the number of sequences $N$ may be different from the number of sources $M$, and only some $K \leq \min\{M,N\}$ of the observed sequences may be produced by a source from the set of sources of interest.
We consider two versions of the problem -- one in which the probability laws of the sources are known, and another in which the probability laws of the sources are unspecified but one training sequence from each of the sources is available.
We show that both these problems can be solved using a sequence of tests that are allowed to produce ``no-match'' decisions.
The tests ensure exponential decay of the probabilities of incorrect matching as the sequence lengths increase, and minimize the ``no-match'' decisions.
Both tests can be implemented using variants of the minimum weight matching algorithm applied to a weighted bipartite graph.
We also compare the performances obtained by using these tests with those obtained by using tests that do not take into account the constraint that the sequences are produced by distinct sources.
For the version of the problem in which the probability laws of the sources are known, we compute the rejection exponents and error exponents of the tests and show that tests that make use of the constraint have better exponents than tests that do not make use of this information.
\end{abstract}


\section{Introduction}\label{sec: intro}

Classical multi-hypothesis testing \cite{leh05} addresses the following problem: Given probability distributions of $M$ sources and one observation sequence (or string), decide which of the $M$ sources produced the sequence.
Classical statistical classification \cite{hastibfri09} also addresses the same problem with the only difference that the probability distributions of the sources are not known exactly, but instead, have to be estimated from training sequences produced by the sources.
Figure~\ref{fig:classprobs} illustrates these classical problems.
In this paper we study a generalization of these problems which is relevant in applications like de-anonymization of anonymized data \cite{unnnai13}.
Instead of one observation sequence, suppose that you are given $N$ observation sequences, subject to the constraint that each sequence is produced by a distinct source.
We consider the task of matching the sequences to the correct sources that produced them, as illustrated in Figure~\ref{fig:matchprobsnew}.
Focusing on finite alphabet sources, we study these matching problems as composite hypothesis testing problems.
We refer to the first problem, in which the distributions of the sources are known, as the \emph{matching problem with known sources}, and the second problem in which only training sequences under the sources are given, as the \emph{matching problem with unknown sources}.
We obtain solutions to both these problems that are asymptotically optimal in error probability as the length of the sequences increases to infinity.

The main difference between these problems and the standard multi-hypothesis testing and classification problems is the constraint that the observation sequences are produced by distinct sources.
It is clear that in the absence of such a constraint, these problems are just repeated versions of the standard problems.
The constraint adds more structure to the solution and leads to an improvement in classification accuracy.
We use large deviations analysis to quantify the improvement in performance in terms of the asymptotic rate of decay of the error probabilities and the probabilities of rejection of the optimal tests with and without the constraints.
We obtain asymptotically optimal solutions to these matching problems using a generalization of the approach of Gutman \cite{gut89}, who solved the classical statistical classification problem.

\begin{figure*}
\centering
\subfigure[Multihypothesis testing: Match observed string to correct source probability mass function (PMF)]{
\psfrag{s1}{}
\psfrag{s2}{Source PMFs}
\psfrag{p1}{$\mu_1$}
\psfrag{p2}{$\mu_2$}
\psfrag{p3}{$\mu_3$}
\psfrag{p4}{$\mu_M$}
\psfrag{o1}{}
\psfrag{vd}{\Huge $\vdots$}
\psfrag{o2}{Observed string}
\psfrag{y1}{$y_1$}
\includegraphics[width=2.5in]
{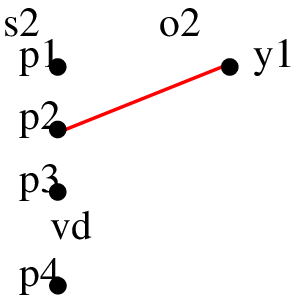}
\label{fig:MHT}
}
\subfigure[Statistical classification: Match observed string to the training string from the correct source]{
\psfrag{s2}{Source strings}
\psfrag{o2}{Observed string}
\psfrag{p1}{$x_1$}
\psfrag{p2}{$x_2$}
\psfrag{p3}{$x_3$}
\psfrag{p4}{$x_M$}
\psfrag{vd}{\Huge $\vdots$}
\psfrag{y1}{$y_1$}
\includegraphics[width=2.5in]
{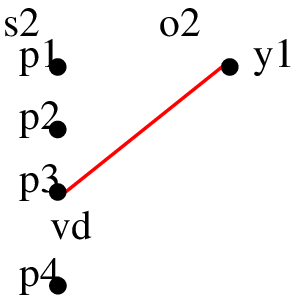}
\label{fig:SC}
}
\caption[]{Illustration of the matching tasks in classical multihypothesis testing and statistical classification}
\label{fig:classprobs}
\end{figure*}

\begin{figure*}
\centering
\subfigure[Matching problem~\Prb{known}: Match observed strings to correct source PMFs]{
\psfrag{s1}{}
\psfrag{s2}{Source PMFs ($\clM$)}
\psfrag{p1}{$\mu_1$}
\psfrag{p2}{$\mu_2$}
\psfrag{p3}{$\mu_3$}
\psfrag{p4}{$\mu_M$}
\psfrag{o1}{}
\psfrag{vd}{\Huge $\vdots$}
\psfrag{o2}{Observed strings ($\clS$)}
\psfrag{y1}{$y_1$}
\psfrag{y2}{$y_2$}
\psfrag{y3}{$y_3$}
\psfrag{y4}{$y_N$}
\includegraphics[width=2.5in]
{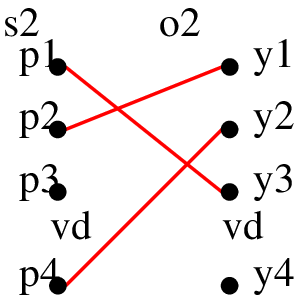}
\label{fig:P1}
}
\subfigure[Matching problem~\Prb{unknown}: Match observed strings to the training strings from the correct sources]{
\psfrag{s2}{Source strings ($\clS_1$)}
\psfrag{o2}{Observed strings ($\clS_2$)}
\psfrag{p1}{$x_1$}
\psfrag{p2}{$x_2$}
\psfrag{p3}{$x_3$}
\psfrag{p4}{$x_M$}
\psfrag{vd}{\Huge $\vdots$}
\psfrag{y1}{$y_1$}
\psfrag{y2}{$y_2$}
\psfrag{y3}{$y_3$}
\psfrag{y4}{$y_N$}
\includegraphics[width=2.5in]
{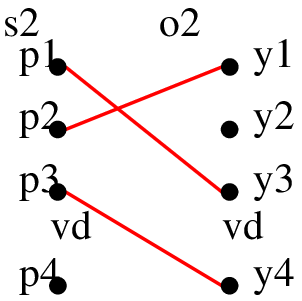}
\label{fig:P2}
}
\caption[]{Illustration of the matching tasks in the problems studied in this paper}
\label{fig:matchprobsnew}
\end{figure*}

Our primary motivation for studying these problems comes from studies on privacy of anonymized databases.
In recent years, many datasets containing information about individuals have been released into public domain in order to provide open access to statistics or to facilitate data mining research.
Often these databases are \emph{anonymized} by suppressing identifiers that reveal the identities of the users, like names or social security numbers.
Nevertheless, recent research (see, e.g., \cite{narshm08,monhidverblo13}) has revealed that the privacy offered by such anonymized databases may be compromised if an adversary correlates the revealed information with publicly available databases.
In our recent work \cite{unnnai13}, we studied the privacy of anonymized user statistics in the presence of auxiliary information.
We showed that anonymized statistical information about a set of users can be easily de-anonymized by an adversary who has access to independent auxiliary observations about the users.
The task of the adversary is to match the auxiliary information to the anonymized statistics, which is exactly the problem that is studied in the current paper.
Another related application of the matching problem is in matching statistical profiles of users obtained from two different sources.
For example, one could obtain location statistics of a set of users either from connections to WiFi access points, or from connections to mobile towers.
It is interesting to try to match users across these two datasets, using the statistics of their location patterns.
Alternatively, the user statistics  could be the frequency distributions of words used by users on two different blog websites.
The matching task then is to identify users who have accounts on both websites.
Matching users across two different datasets increases the net information available about the users which in turn can be used to improve accuracy of targeted services.

Asymptotically optimal hypothesis testing has a long history in literature (see e.g., \cite{hoe65a, bla74a, tus77, unnhuameysurvee11}).
However, hypothesis testing of multiple sequences under the constraint that each sequence is produced by a distinct source, has been studied only rarely.
The prior knowledge of the constraint on the sequences is expected to improve the accuracy of the hypothesis test.
However, the task of identifying the optimal solution is now much more complicated as there are a combinatorial number of hypotheses.
It is not immediately clear what is the best strategy to adopt.
A naive strategy is to try to classify each sequence individually; but that is not expected to yield high accuracy as the constraints are not intelligently exploited.
In \cite[Ch. 10]{ahlweg87} the matching problem with known distributions was studied for the special case of $M=N$, where the analysis was performed by reducing the problem to a multi-hypothesis testing problem.
The same problem was solved in \cite{ahlhar06} for $M=N=2$ under a different optimality criterion from that used in this paper.
In the first part of this paper we study this problem under a different optimality criterion, and identify an optimal solution for general $M$ and $N$.
We also provide a quantitative comparison of the performance obtained with our optimal solution, with that of a test that ignores the constraint on the sequences.
These results quantify the improvement in performance that can be obtained by exploiting the constraint on the sequences.
In the second part of this paper we study the problem of matching one set of sequences to another set of sequences.
The approach we adopt in most of this paper is a generalization of that adopted by Gutman in \cite{gut89}, who solved the matching problem with unknown source distributions for $N=1$.
Gutman showed that if a ``no-match'' decision is allowed, it is possible to guarantee exponential decay of all misclassification probabilities at a desired rate.
In the second part of this paper we simplify the structure of Gutman's solution and show that his method can be generalized to solve the matching problem with unknown source distributions for general $N$.
We also demonstrate that although there are a combinatorial number of hypotheses for these problems, simple polynomial-time algorithms can be used to identify the optimal solutions.

The rest of the paper is organized as follows.
After introducing our notation, we state the problems in mathematical form in Section~\ref{sec:problemstatement}.
We present our solution to the generalization of the hypothesis testing problem in Section~\ref{sec:optknown} and our solution to the generalization of the statistical classification problem in Section~\ref{sec:optunknown}.
In addition to identifying the optimal solutions, we also compare the performances of these solutions with those of solutions that do not explicitly take into account the constraint on the distinctness of the sources that produced the sequences.
We discuss practical aspects of implementing the test and conclude in Section~\ref{sec:conc}.
For ease of reading, we relegate proofs of all results to the appendix.
\medskip

\noindent \textit{Notation: }
For a finite alphabet $\zstate$, we use $\pz$ to denote the set of all probability distributions defined on $\zstate$.
We interchangeably use the words sequence and string to refer to an ordered list of elements from $\zstate$.
For any string $s \in \zstate^n$, we use $\Gamma_{s} \in \pz$ to denote the empirical distribution of the string defined as
\[
\Gamma_{s}(z) = \frac{1}{n}\sum_{i=1}^n \clI\{s_i = z\}, z \in \zstate.
\]
For $\mu \in \pz$ and $z \in \zstate$ we use $\mu(z)$ to denote the probability mass at $z$ under $\mu$.
For a string $s \in \zstate^n$, we use $\mu(s)$ to denote the probability of observing $s$ at the output of a source that generates $n$ observations i.i.d. according to law $\mu$.
We use $H(\mu)$ to denote the Shannon entropy
\[
H(\mu) = \sum_{z \in \zstate} -\mu(z) \log \mu(z).
\]
For $\nu, \mu \in \pz$ we use
\[
D(\nu \| \mu) = \sum_{z \in \zstate} \nu(z) \log \frac{\nu(z)}{\mu(z)},
\]
to denote the Kullback-Leibler divergence between probability distributions $\nu$ and $\mu$.
Throughout the paper we use $\log$ to refer to logarithm to the base $2$.
We use $[N] := \{1,2,\ldots,N\}$ and $\Sym([N])$ to denote the set of all permutations on $[N]$, i.e., if $\sigma \in \Sym([N])$ then $\sigma$ is a one-to-one mapping from $[N]$ onto itself.



\section{Problem Statement}\label{sec:problemstatement}
Consider a set of independent sources each producing i.i.d. data according to distinct but unknown probability distributions on a finite alphabet $\zstate$.
Let $\clU \subset \pz$ denote the set of probability distributions followed by these sources.
Let $\clM \subseteq \clU$ and $\clN \subseteq \clU$ be such that $\clM \cap \clN = \clK$.
Let $|\clM| = M$, $|\clN| = N$ and $|\clK| = K$.
We are concerned with the following two problems.
\begin{problem}
\item \label{prb:known} \emph{[Known sources]} Let $\clM = \{\mu_1,\mu_2,\ldots,\mu_M\}$. Suppose $\clM$, $N$ and $K$ are known but $\clN$ and $\clK$ are not. Further, suppose a set $\clS= \{y_1, y_2, \ldots, y_N\}$ of unlabeled sequences of length $n$ each generated independently under a distinct distribution in $\clN$ is given. Identify the $K$ sequences in $\clS$ that were generated under distributions in $\clM$, and match each of these sequences to the correct distribution in $\clM$ that generated it.

\item \label{prb:unknown} \emph{[Unknown sources]} Suppose the distributions are unknown, but $M$, $N$ and $K$ are known. Given a set $\clS_1= \{x_1, x_2, \ldots, x_M\}$ of unlabeled sequences of length $n$ each generated under a distinct distribution in $\clM$, and a set $\clS_2= \{y_1, y_2, \ldots, y_N\}$ of unlabeled sequences of length $n$ each generated under a distinct distribution in $\clN$, identify the $K$ sequences in $\clS_1$ that were generated by distributions in $\clK$ and match each of them to the sequence in $\clS_2$ that was generated under the same distribution. The information in $\clS_1$ and $\clS_2$ are assumed to be independent of each other.
\end{problem}

As mentioned earlier, such problems arise in the fields of de-anonymization of databases, and of identification of users from the statistics of their data.
For example, $\clS_1$ and $\clS_2$ in problem~\Prb{unknown} could be two anonymized databases of data belonging to known sets of users.
It may be known that the two sets of users are identical, in which case $\clM = \clN = \clK$, or it may be that the second set of users is a subset of the first set, in which case $\clM \supset \clN = \clK$.
In some other cases, the sets $\clM$ and $\clN$ might not be subsets but the statistician may have an estimate for the number $K$ of common users in the two sets, i.e., the size of $\clK$.
Problem~\Prb{known} arises when the statistical behavior of the data belonging to the first set of users is known accurately.
As stated above, for simplifying the analysis, in both these problems we have assumed that the sample size of all sequences are equal, and that the alphabet $\zstate$ is a finite set.
In Section~\ref{sec:conc} we discuss how the analyses and results can be generalized to the setting in which the sequence lengths are not equal, or when the alphabet is continuous.

Both problems~\Prb{known} and~\Prb{unknown} can be visualized as variants of the following problem.
Let $\clV_1$, $\clV_2$ be two sets of objects with $|\clV_1| = M$ and $|\clV_2|=N$.
Consider a complete bipartite graph $\clG$ \cite{west2001introduction} with vertices $\clV = \clV_1 \cup \clV_2$ such that every vertex in $\clV_1$ is connected to every vertex in $\clV_2$ by an edge, as illustrated in Figure~\ref{fig:compbip}.
The objective is to identify a matching\footnote{A matching in a graph is a set of edges such that no two edges in the set share a common vertex.} of cardinality $K$ in the graph $\clG$ that satisfies some conditions.
In problem~\Prb{known}, the set $\clV_1 = \clM$ and $\clV_2 = \clS$, and in problem~\Prb{unknown} the set $\clV_1 = \clS_1$ and $\clV_2 = \clS_2$.
Illustrations of these matching problems are shown in Figure~\ref{fig:matchprobsnew}.
More precisely, these problems are multi-hypothesis testing problems, where each hypothesis corresponds to a potential matching of  cardinality $K$ on the graph $\clG$.
Thus, for each problem, there are a total of $J={M \choose K} {N \choose K } K!$ different hypotheses.
We let $\clH_1, \clH_2,\ldots, \clH_J$ denote an enumeration of the hypotheses for each problem.
We use the same notation for the hypotheses in both problems; it should always be clear from context what is intended.
It is to be noted that in both problems the hypotheses are composite.
In problem~\Prb{known}, the sequences in $\clS$ that do not follow a distribution in $\clK$ are allowed to have any distribution.
In problem~\Prb{unknown} the probability distributions of each source could lie anywhere in $\pz$.

\begin{figure}[h]
\centering
\psfrag{s2}{$\clV_1$}
\psfrag{o2}{$\clV_2$}

\psfrag{p1}{$1$}
\psfrag{p2}{$2$}
\psfrag{p3}{$3$}
\psfrag{p4}{$M$}
\psfrag{vd}{\Huge $\vdots$}
\psfrag{y1}{$1$}
\psfrag{y2}{$2$}
\psfrag{y3}{$3$}
\psfrag{y4}{$N$}
\includegraphics[width=.4\columnwidth]{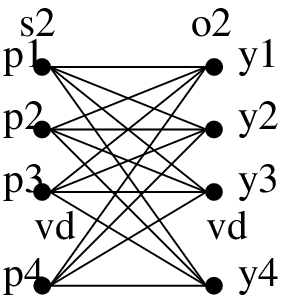}
\caption{A complete bipartite graph. Every vertex in $\clV_1$ is connected by an edge to every vertex in $\clV_2$.} \label{fig:compbip}
\end{figure}

We seek decision rules for these problems that admit exponential decay of error probabilities as a function of $n$ under each hypothesis.
For this purpose, for each problem, we allow a no-match decision, i.e., rejection of all $J$ hypotheses.
Thus a decision rule for problem~\Prb{known} is given by a partition $\Omega = (\Omega_1, \Omega_2, \ldots, \Omega_J, \Omega_R)$ of $\mathbf{Z}^1 = (\zstate^n)^N$ the space of vectors of the form $y_1, y_2, \ldots, y_N$, into $(J + 1)$ disjoint cells $\Omega_1, \Omega_2, \ldots, \Omega_J, \Omega_R$, where $\Omega_\ell$ is the acceptance region for hypothesis $\clH_\ell$ for $\ell \in [J]$, and $\Omega_R = \bfZ^1 -\cup_{\ell=1}^J \Omega_\ell$ is the rejection zone.
Similarly, a decision rule for problem~\Prb{unknown} is given by a partition $\Omega = (\Omega_1, \Omega_2, \ldots, \Omega_J, \Omega_R)$ of $\mathbf{Z} = (\zstate^n)^M \times (\zstate^n)^N$ the space of vectors of the form $x_1, x_2, \ldots, x_M, y_1, y_2, \ldots, y_N$, into $(J + 1)$ disjoint cells $\Omega_1, \Omega_2, \ldots, \Omega_J, \Omega_R$, where $\Omega_\ell$ is the acceptance region for hypothesis $\clH_\ell$, and $\Omega_R = \bfZ -\cup_{\ell=1}^J \Omega_\ell$ is the rejection zone.
In both these problems, we consider an error event $\err$ under hypothesis $\clH_\ell$ to denote a decision in favor of a wrong hypothesis $\clH_k$ where $k \neq \ell$.
We denote a decision in favor of rejection by $\rej$.
Note that a decision in favor of rejection does not correspond to an error event under any hypothesis.
Thus, using the notation $\underline x = (x_1, x_2, \ldots, x_M)$ and $\underline y = (y_1, y_2, \ldots, y_N)$, the probability of error of the decision rule $\Omega$ under hypothesis $\clH_\ell$ is given by
\begin{equation}
P_\Omega(\err/\clH_\ell) = \Prob_{\clH_\ell}\left\{\underline y \in \displaystyle\bigcup_{\substack{k=1\\k \neq \ell}}^J\Omega_k \right\}
\end{equation}
for problem~\Prb{known}, and by
\begin{equation}
P_\Omega(\err/\clH_\ell) = \Prob_{\clH_\ell}\left\{(\underline x, \underline y) \in \displaystyle\bigcup_{\substack{k=1\\k \neq \ell}}^J\Omega_k \right\}
\end{equation}
for problem~\Prb{unknown}.
Here $\Prob_{\clH_\ell}$ indicates the probability measure under hypothesis $\clH_\ell$.
For both problems, we consider a generalized Neyman-Pearson criterion wherein we seek to ensure that all error probabilities decay exponentially in $n$ with some predetermined slope $\lambda$, and simultaneously minimize the rejection probability subject to these constraints. Specifically, we seek optimal decision rules $\Omega$ such that $\forall\clU \subset \pz$
\begin{equation}
\liminf_{n \to \infty} -\frac{1}{n} \log P_\Omega(\err/\clH_\ell) \geq  \lambda, \ell \in [J], \label{eqn:criterion}
\end{equation}
and $\Omega_R$ is minimal.
The quantity on the left hand side of \eqref{eqn:criterion} is called the error exponent under hypothesis $\clH_\ell$.
The optimality criterion for Problem~\Prb{unknown} is defined analogously.
This approach for identifying an optimal test with rejection was introduced by Gutman in \cite{gut89} when he solved Problem~\Prb{unknown} for $N=K=1$.

We define rejection probabilities and rejection exponents analogously to error probability and error exponents.
The probability of rejection of the decision rule $\Omega$ under hypothesis $\clH_\ell$ is given by
\begin{equation}
P_\Omega(\rej/\clH_\ell) = \Prob_{\clH_\ell}\left\{\underline y \in \Omega_R \right\} \label{eqn:rejprob}
\end{equation}
for both problem~\Prb{known} and by
\begin{equation}
P_\Omega(\rej/\clH_\ell) = \Prob_{\clH_\ell}\left\{(\underline x, \underline y) \in \Omega_R \right\} \label{eqn:rejprob2}
\end{equation}
for problem~\Prb{unknown}. 
The rejection exponents capture the rate of decay of the rejection probabilities and are defined as
\begin{equation}
\liminf_{n \to \infty} -\frac{1}{n} \log P_\Omega(\rej/\clH_\ell).\label{eqn:rejexpdefn}
\end{equation}

Some special cases of problem~\Prb{known} are listed below.
\begin{enumerate}
\item If $M\geq2$ and $N=K=1$, this is just the classical $M$-ary hypothesis testing problem, with rejection.
\item For general $M,N,K$, a variant of this hypothesis testing problem without rejection is discussed in \cite{ahlhar06}.
The special case of $M=N=K=2$ is considered in detail.
In this case there are exactly two hypotheses.
The authors solve the problem of optimizing one type of error exponent under a constraint on the other type of error exponent.
\end{enumerate}

Specific versions of problem~\Prb{unknown} have been studied in the past.
Some special cases of interest are listed below.
\begin{enumerate}
\item When $N=K=1$, this is the problem studied by Gutman in \cite{gut89}.
The results and approach of the present paper are largely based on \cite{gut89}.
\item For $M=N=K$ we studied problem~\Prb{unknown} in a recent work \cite{unnnai13}.
\end{enumerate}
In the present paper, we generalize these works to arbitrary choices of $\clM$, $\clN$ and $\clK$.

The main tools we use for proving the results in this paper are the \emph{method of types} \cite{covtho06} and Sanov's theorem \cite{san57a} (see also \cite{demzei98a}).
The following lemma (see e.g., \cite[Ch. 11]{covtho06} for a proof) gives a bound on the probability of observing a sequence with a specific type, or equivalently, a specific empirical distribution.
\begin{lemma}\label{lem:typeprobnew}
Let $\ystate$ be a finite set and $s \in \ystate^n$ be an arbitrary string of length $n$ with entries in $\ystate$.
Let $y \in \ystate^n$ be a random string drawn i.i.d. under probability law $\nu \in \py$.
Then
\begin{equation}
2^{-n \left(D(\Gamma_s \| \nu) + \frac{|\zstate| \log (n+1)}{n}\right)} \leq \Prob\{\Gamma_y = \Gamma_s\} \leq 2^{-n D(\Gamma_s \| \nu)}\label{eqn:typeineq}
\end{equation}
where $\Gamma_s$ and $\Gamma_y$ represent the empirical distributions of $s$ and $y$ respectively.
\qed
\end{lemma}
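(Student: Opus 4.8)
The plan is to carry out the standard method-of-types argument. Write $P = \Gamma_s$ for the empirical distribution of $s$ and let $T(P) = \{z \in \ystate^n : \Gamma_z = P\}$ be its type class. The first step is to observe that every string in $T(P)$ receives exactly the same probability under $\nu$: since $\nu(z) = \prod_{a \in \ystate} \nu(a)^{n P(a)}$ for any $z$ of type $P$, taking logarithms to base $2$ and splitting $\sum_a P(a)\log\nu(a) = -D(P\|\nu) - H(P)$ gives
\[
\nu(z) = 2^{-n\left(H(P) + D(P \| \nu)\right)}, \qquad z \in T(P).
\]
Summing over the type class, this immediately reduces the whole problem to estimating the cardinality of $T(P)$, because
\[
\Prob\{\Gamma_y = \Gamma_s\} = |T(P)|\, 2^{-n\left(H(P) + D(P \| \nu)\right)}.
\]

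The second step is to invoke the two classical bounds on the size of a type class (see \cite{covtho06}),
\[
(n+1)^{-|\ystate|}\, 2^{nH(P)} \;\le\; |T(P)| \;\le\; 2^{nH(P)},
\]
where I write $|\ystate|$ for the alphabet size (this is the quantity denoted $|\zstate|$ in the statement). The upper bound is the easy direction: evaluating the product law $P$ on its own type class, $1 \ge P(T(P)) = |T(P)|\, 2^{-nH(P)}$, which rearranges to $|T(P)| \le 2^{nH(P)}$. For the lower bound I would use that among all type classes $T(Q)$, the class $T(P)$ carries the largest $P$-probability, combined with the fact that a length-$n$ string over $\ystate$ has at most $(n+1)^{|\ystate|}$ possible types (each type is fixed by the counts $nP(a) \in \{0,1,\dots,n\}$). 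Since the $P$-probabilities of all type classes sum to $1$ and there are at most $(n+1)^{|\ystate|}$ of them, the largest one satisfies $P(T(P)) = |T(P)|\,2^{-nH(P)} \ge (n+1)^{-|\ystate|}$, which is the claimed lower bound.

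The final step is arithmetic substitution. Plugging the upper bound on $|T(P)|$ into the expression for $\Prob\{\Gamma_y = \Gamma_s\}$ cancels the $2^{nH(P)}$ factor and yields $\Prob\{\Gamma_y = \Gamma_s\} \le 2^{-nD(P\|\nu)}$. Plugging in the lower bound gives $\Prob\{\Gamma_y = \Gamma_s\} \ge (n+1)^{-|\ystate|}\, 2^{-nD(P\|\nu)}$, and rewriting the polynomial prefactor as $(n+1)^{-|\ystate|} = 2^{-n\cdot |\ystate|\log(n+1)/n}$ produces exactly the two-sided bound \eqref{eqn:typeineq}.

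The only genuinely nontrivial ingredient is the lower bound on $|T(P)|$, and within it the ``largest type class'' comparison $P(T(Q)) \le P(T(P))$ for every competing type $Q$; I expect this to be the main obstacle. I would establish it by writing the ratio $|T(P)|/|T(Q)| = \prod_{a} (nQ(a))!\,/\,(nP(a))!$ and applying the elementary factorial estimate $m!/k! \ge k^{m-k}$ term by term; after taking logarithms the term proportional to $\log n$ drops out (because $\sum_a (Q(a)-P(a)) = 0$), and the comparison collapses to the nonnegativity of a Kullback--Leibler divergence. Everything else in the argument is bookkeeping.
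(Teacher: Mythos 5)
Your proof is correct and is precisely the standard method-of-types argument; the paper does not actually prove this lemma but cites \cite[Ch.~11]{covtho06}, where exactly this chain of steps (equiprobability within a type class, the two-sided bound on $|T(P)|$ via the ``at most $(n+1)^{|\ystate|}$ types'' count and the maximality of $P(T(P))$) appears. One cosmetic remark: in your final factorial comparison the logarithm of the ratio collapses to exactly $n^{n\sum_a(Q(a)-P(a))}=1$ rather than to the nonnegativity of a Kullback--Leibler divergence, but this does not affect the validity of the argument.
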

Sanov's theorem is a statement on the behavior of the probability as $n \to \infty$.
It characterizes the large deviations behavior of the empirical distribution of an i.i.d. sequence as stated below.
\begin{theorem}[Sanov \cite{san57a}]\label{thm:san}
Let $\ystate$ be a finite set.
For any $\nu \in \py$ if $y \in \ystate^n$ is a random sequence of length $n$ drawn i.i.d. under $\nu$, and $A \subset \py$
such that $A$ is the closure of its interior, then
\[
\lim_{n \to \infty} \frac{1}{n}\log \Prob \{\Gamma_y \in A\} = - \min_{\mu \in A} D(\mu \| \nu).
\]
\qed
\end{theorem}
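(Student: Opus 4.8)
The plan is to establish the two matching inequalities whose combination gives the claimed limit, using the \emph{method of types} and the two-sided bound of Lemma~\ref{lem:typeprobnew}. The two structural facts I will rely on are: (i) every empirical distribution $\Gamma_y$ of a length-$n$ string over $\ystate$ is one of at most $(n+1)^{|\ystate|}$ realizable types, so the set of achievable types is a polynomially-large grid inside the simplex $\py$; and (ii) this grid becomes dense in $\py$ as $n\to\infty$, since each realizable type has coordinates that are integer multiples of $1/n$. I will also record at the outset that $D(\cdot\|\nu)$ is lower semicontinuous on the compact simplex $\py$ and that $A$ is closed (being the closure of its interior), so the minimum $\min_{\mu\in A} D(\mu\|\nu)$ is attained at some $\mu^*\in A$.

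For the upper bound I would decompose the event according to the type of $y$, writing $\Prob\{\Gamma_y\in A\}=\sum_{Q}\Prob\{\Gamma_y=Q\}$, the sum ranging over the realizable types $Q\in A$. Bounding each summand by the right-hand inequality of Lemma~\ref{lem:typeprobnew}, namely $\Prob\{\Gamma_y=Q\}\leq 2^{-nD(Q\|\nu)}\leq 2^{-n\min_{\mu\in A}D(\mu\|\nu)}$, and multiplying by the number of types from fact (i) yields
\[
\Prob\{\Gamma_y\in A\}\leq (n+1)^{|\ystate|}\,2^{-n\min_{\mu\in A}D(\mu\|\nu)}.
\]
Taking $\frac{1}{n}\log$ and letting $n\to\infty$, the polynomial prefactor contributes $\frac{|\ystate|\log(n+1)}{n}\to 0$, so that $\limsup_n \frac{1}{n}\log\Prob\{\Gamma_y\in A\}\leq -\min_{\mu\in A}D(\mu\|\nu)$.

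For the matching lower bound I would keep only a single, well-chosen type. Using fact (ii) together with the hypothesis that $A$ equals the closure of its interior, I would produce a sequence of realizable types $Q_n\in A$ with $Q_n\to\mu^*$: since $\mu^*$ lies in the closure of $\mathrm{int}(A)$, I can select interior points approaching it and round each to the nearest realizable type, which stays inside $A$ once $n$ is large (an interior point has a ball of $A$ around it, and rounding moves a point by $O(1/n)$) while still converging to $\mu^*$. Retaining only the term $Q_n$ and applying the left-hand inequality of Lemma~\ref{lem:typeprobnew} gives $\Prob\{\Gamma_y\in A\}\geq \Prob\{\Gamma_y=Q_n\}\geq 2^{-n(D(Q_n\|\nu)+|\ystate|\log(n+1)/n)}$, whence $\liminf_n \frac{1}{n}\log\Prob\{\Gamma_y\in A\}\geq -\lim_n D(Q_n\|\nu)=-D(\mu^*\|\nu)$ by continuity of $D(\cdot\|\nu)$ at $\mu^*$ (valid when $D(\mu^*\|\nu)<\infty$, the case $D(\mu^*\|\nu)=\infty$ being trivial). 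I expect the delicate step to be precisely this lower bound: without the assumption that $A$ is the closure of its interior one cannot guarantee that $A$ contains types arbitrarily close to $\mu^*$, since the minimizer could sit on a boundary piece that the grid of types narrowly misses. Thus the crux is this topological regularity condition rather than any hard estimate, and it is exactly what forces the two one-sided bounds to coincide.
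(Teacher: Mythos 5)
The paper does not actually prove this statement: Sanov's theorem is quoted as a classical result with citations to \cite{san57a} and \cite{demzei98a} (and the finite-alphabet version is the textbook theorem in \cite{covtho06}), so there is no in-paper argument to compare against. Your proof is the standard method-of-types derivation, resting on exactly the two-sided type-probability bound (Lemma~\ref{lem:typeprobnew}) that the paper itself uses for its other large-deviations estimates, and it is essentially correct: the upper bound via the polynomial count of types is airtight, and the lower bound via a single well-chosen type is the right idea, with the hypothesis that $A$ is the closure of its interior used in precisely the right place.

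Two small points deserve tightening. First, in the lower bound the radius of the ball around your interior points $\mu_{0}$ contained in $A$ may shrink to zero as $\mu_{0}\to\mu^*$, so ``round each to the nearest realizable type'' needs a short diagonalization: fix a sequence of interior points $\mu_{0,k}\to\mu^*$ with $B(\mu_{0,k},r_k)\subset A$, and for each $k$ take $n$ large enough that the $O(1/n)$ rounding error is below $r_k$ before moving to $k+1$. Second, your appeal to ``continuity of $D(\cdot\|\nu)$ at $\mu^*$'' is only valid along types $Q_n$ that are absolutely continuous with respect to $\nu$; if $\nu$ lacks full support, a rounded type could place mass $1/n$ on a symbol with $\nu$-probability zero, making $D(Q_n\|\nu)=\infty$ and the retained term useless. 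This is repaired either by assuming $\nu$ has full support (as the paper does for the distributions where it matters) or by performing the rounding within the face of $\py$ spanned by $\mathrm{supp}(\nu)$ and checking that the resulting type still lies in $A$; the case $D(\mu^*\|\nu)=\infty$ is, as you say, trivial. With these repairs the argument is complete and is the same proof one finds in \cite[Ch.~11]{covtho06}.
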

The main result of Section~\ref{sec:optunknown} is based on the following lemma which gives a bound on large-deviations of a pair of empirical distributions.
\begin{lemma}\label{lem:ldptwoemp}
Let $\ystate$ be a finite set.
For $i =1,2$ let $y_i \in \ystate^n$ denote a length $n$ string drawn i.i.d. under $\nu \in \py$.
Further assume that $y_1$ and $y_2$ are mutually independent.
Then we have
\[
\lim_{n \to \infty} - \frac{1}{n} \log \Prob\{\sum_{i=1}^2D(\Gamma_{y_i} \| \half(\Gamma_{y_1}  + \Gamma_{y_2} ))   \geq \lambda\} \geq \lambda.
\]
\qed
\end{lemma}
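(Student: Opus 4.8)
The plan is to combine the method of types (Lemma~\ref{lem:typeprobnew}) with an algebraic identity showing that the statistic $D(\Gamma_{y_1}\|\bar\Gamma)+D(\Gamma_{y_2}\|\bar\Gamma)$, where $\bar\Gamma:=\half(\Gamma_{y_1}+\Gamma_{y_2})$, is always dominated by $D(\Gamma_{y_1}\|\nu)+D(\Gamma_{y_2}\|\nu)$. The point is that the nuisance law $\nu$ enters only through a nonnegative term, so the event of interest forces the $\nu$-divergences to be large, which the type bound then converts into exponential decay.

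First I would condition on the pair of types $(T_1,T_2):=(\Gamma_{y_1},\Gamma_{y_2})$. Since $y_1$ and $y_2$ are independent and each i.i.d.\ under $\nu$, the upper bound of Lemma~\ref{lem:typeprobnew} gives
\[
\Prob\{\Gamma_{y_1}=T_1,\ \Gamma_{y_2}=T_2\}\ \leq\ 2^{-n\left(D(T_1\|\nu)+D(T_2\|\nu)\right)}.
\]
Because the number of length-$n$ types over $\ystate$ is at most $(n+1)^{|\ystate|}$, the event $\{D(\Gamma_{y_1}\|\bar\Gamma)+D(\Gamma_{y_2}\|\bar\Gamma)\geq\lambda\}$ is a union of at most $(n+1)^{2|\ystate|}$ type-pairs, and a union bound yields
\[
\Prob\Big\{\textstyle\sum_{i=1}^2 D(\Gamma_{y_i}\|\bar\Gamma)\geq\lambda\Big\}\ \leq\ (n+1)^{2|\ystate|}\ \max\ 2^{-n\left(D(T_1\|\nu)+D(T_2\|\nu)\right)},
\]
where the maximum is over all type-pairs $(T_1,T_2)$ satisfying $D(T_1\|\bar T)+D(T_2\|\bar T)\geq\lambda$, with $\bar T:=\half(T_1+T_2)$.

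The crux is the identity
\[
D(T_1\|\nu)+D(T_2\|\nu)\ =\ D(T_1\|\bar T)+D(T_2\|\bar T)+2\,D(\bar T\|\nu),
\]
which I would establish by writing $\log\frac{T_i(z)}{\nu(z)}=\log\frac{T_i(z)}{\bar T(z)}+\log\frac{\bar T(z)}{\nu(z)}$ in each divergence and collecting the second terms using $T_1(z)+T_2(z)=2\bar T(z)$. Since $D(\bar T\|\nu)\geq 0$, on the event we have $D(T_1\|\nu)+D(T_2\|\nu)\geq D(T_1\|\bar T)+D(T_2\|\bar T)\geq\lambda$, so every surviving type-pair contributes at most $2^{-n\lambda}$. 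Hence $\Prob\{\sum_i D(\Gamma_{y_i}\|\bar\Gamma)\geq\lambda\}\leq (n+1)^{2|\ystate|}\,2^{-n\lambda}$, and taking $-\frac1n\log(\cdot)$ and then the $\liminf$ annihilates the polynomial prefactor, giving the claimed lower bound $\geq\lambda$. (If no type-pair satisfies the constraint the probability is zero and the bound holds trivially.)

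I expect the only real obstacle to be recognizing the identity: the essential observation is that the unknown law $\nu$ appears solely through the nonnegative summand $2D(\bar T\|\nu)$, so the divergence-to-$\nu$ dominates the divergence-to-mean statistic uniformly in $\nu$. Once this is in hand, the remainder is the standard types-counting argument, where the polynomially many types contribute a subexponential factor that does not affect the exponent.
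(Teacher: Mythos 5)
Your proof is correct, and it reaches exactly the same final bound as the paper, namely $\Prob\{\cdot\}\leq (n+1)^{2|\ystate|}2^{-n\lambda}$, but via a different decomposition. The paper eliminates the nuisance law $\nu$ at the very first step: it bounds $\nu(x)\nu(y)\leq 2^{-2nH(\half(\Gamma_x+\Gamma_y))}$ using Lemma~\ref{lem:probandentropy} applied to the concatenated string, and then invokes the identity
\[
2H\bigl(\half(\Gamma_x+\Gamma_y)\bigr)=H(\Gamma_x)+H(\Gamma_y)+D\bigl(\Gamma_x\|\half(\Gamma_x+\Gamma_y)\bigr)+D\bigl(\Gamma_y\|\half(\Gamma_x+\Gamma_y)\bigr),
\]
summing $2^{-nH(\Gamma_x)}$ over strings via Lemma~\ref{lem:sumexpentropy} to get the polynomial prefactor. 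You instead keep $\nu$ in play through the type-class bound $2^{-n(D(T_1\|\nu)+D(T_2\|\nu))}$ and dispose of it at the end via the Pythagorean identity $D(T_1\|\nu)+D(T_2\|\nu)=D(T_1\|\overline{T})+D(T_2\|\overline{T})+2D(\overline{T}\|\nu)$ together with $D(\overline{T}\|\nu)\geq 0$, with the prefactor coming from counting type pairs. The two identities are algebraically equivalent (substitute $D(T_i\|\nu)=-H(T_i)-\sum_z T_i(z)\log\nu(z)$ into yours and the paper's drops out), so the computations are two phrasings of the same estimate; what your version buys is that the uniformity in $\nu$ is made explicit --- the unknown law enters only through the nonnegative summand $2D(\overline{T}\|\nu)$ --- which is conceptually the cleaner way to see why the exponent does not degrade for unfavorable $\nu$, and it is the form that generalizes most transparently to the $K$-pair extension used in the proof of Theorem~\ref{thm:opt}.
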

We provide a proof in the appendix.

It is possible to generalize Sanov's theorem to the infinite alphabet setting in which $\ystate$ is countably or uncountably infinite (see, e.g., \cite{demzei98a}).
However, in this paper we focus only on the finite alphabet setting.
The analysis of error probabilities and optimal tests in the continuous alphabet setting is much more involved, and are typically based on Cramer's theorem \cite{demzei98a}.
We present discussions on potential extensions of the results of this paper to other settings, including that of continuous alphabets, in Section~\ref{sec:conc}.


Before we present the solutions, we summarize the main results below.
\begin{itemize}
\item Optimal test for the matching problem~\Prb{known} with known source distributions, given in Theorem~\ref{thm:opt1}.
\item Comparison of error exponents and rejection exponents of the optimal test with the test that ignores constraints on sequences.
\item Optimal test for the matching problem~\Prb{unknown} with unknown source distributions, given in Theorem~\ref{thm:opt}.
\end{itemize}


\section{Optimal Matching with known distributions}\label{sec:optknown}
In this section we solve problem~\Prb{known}.
As described in Section~\ref{sec:problemstatement} we use the optimality criterion based on error exponents given in \eqref{eqn:criterion}.
Let $\clM = \{\mu_1,\mu_2,\ldots,\mu_M\} \subset \pz$ as before.
Let $\clG$ be the graph in Figure~\ref{fig:compbip} with $\clV_1$ and $\clV_2$ respectively representing $\clM$ and $\clS$, both of which are described in the statement of problem~\Prb{known}.
Let $\match_\ell \subset \{1,2,\ldots,M\} \times \{1,2,\ldots,N\}$ with $|\match_\ell| = K$ denote the matching on the complete bipartite graph $\clG$ under hypothesis $\clH_\ell$.
Any edge $e \in \match_\ell$ can be represented as $e = (e_1,e_2)$ with the understanding that the edge connects $\mu_{e_1}$ and $y_{e_2}$ in graph $\clG$.
Thus, under hypothesis $\clH_\ell$ we have
\[
\clK = \{\mu_{e_1}: e \in \match_\ell\} = \clM \cap \clN
\]
representing the probability distributions followed by the sources that produced the sequences in $\{y_{e_2}: e \in \match_\ell\}$.
There are $M-K$ sources in $\clM \setminus \clK$, which do not produce any sequence in $\clS$, and there are $N-K$ sequences in $\clS$ that are produced by sources in $\clN \setminus \clK$.


Let
\begin{eqnarray}
D(\clH_\ell)&=& \sum_{(i,j) \in \match_\ell} D(\Gamma_{y_j} \| \mu_i).\label{eqn:matchweight}
\end{eqnarray}
Consider the estimate for the hypothesis given by
\begin{equation}
\what \clH = \what \clH(\underline{y}) = \argmin_{\clH_\ell} D(\clH_\ell)\label{eqn:optmatch1}
\end{equation}
where the minimization is performed over all hypotheses.
As each hypothesis is represented by a matching on $\clG$ with cardinality equal to $K$, the estimate of \eqref{eqn:optmatch1} can be interpreted as the hypothesis corresponding to the minimum weight cardinality-$K$ matching \cite{west2001introduction} on $\clG$ with appropriate weights assigned to the edges in $\clG$.
For $\mu_i \in \clM$ and $y_j \in \clS$ we let the weight $w_{ij}$ of the edge between them to be 
\begin{equation}
w_{ij} = D(\Gamma_{y_j} \| \mu_i).\label{eqn:weights1} 
\end{equation}
Weight $w_{ij}$ can be interpreted as a measure of the difference between distributions $\mu_i$ and $\Gamma_{y_j}$.
Figure~\ref{fig:weigtedbip1} shows the graph $\clG$ with weights added to the edges.

\begin{figure}[h]
\centering
\psfrag{s2}{$\clV_1 = \clM$}
\psfrag{o2}{$\clV_2 = \clS$}

\psfrag{p1}{$\mu_1$}
\psfrag{p2}{$\mu_2$}
\psfrag{p3}{$\mu_i$}
\psfrag{p4}{$\mu_M$}
\psfrag{vd}{\Huge $\vdots$}
\psfrag{y1}{$y_1$}
\psfrag{y2}{$y_2$}
\psfrag{y3}{$y_j$}
\psfrag{y4}{$y_N$}
\psfrag{wij}{$w_{ij} = D(\Gamma_{y_j} \| \mu_i)$}
\includegraphics[width=.4\columnwidth]{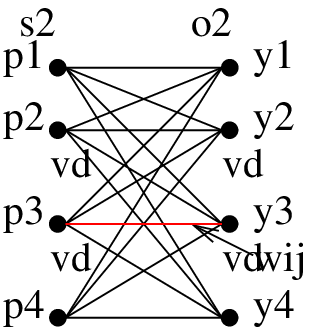}
\caption{The weighted complete bipartite graph $\clG$ for problem~\Prb{known}. The weight of the edge between the $i$-th vertex in $\clV_1$ and the $j$-th vertex in $\clV_2$ is given by~\eqref{eqn:weights1}. The matching corresponding to the hypothesis $\what \clH$ in~\eqref{eqn:optmatch1} is given by the minimum weight matching on this graph with cardinality $K$.} \label{fig:weigtedbip1}
\end{figure}

%
We will now show that a test based on the estimate of $\what \clH$ in \eqref{eqn:optmatch1} is asymptotically optimal.
For proving optimality we restrict ourselves to tests that are based only on the empirical distributions of the observations.
Let $\Gamma_Y$ denote the collection of empirical distributions:
\[
\Gamma_Y:= \left(\Gamma_{y_1},\Gamma_{y_2},\ldots,\Gamma_{y_N}\right).
\]
The restriction to tests based on empirical distributions is justified in the asymptotic setting because of the following lemma.
\begin{lemma}\label{lem:typesuff1}
Let $\Omega = (\Omega_1, \Omega_2,  \ldots, \Omega_J, \Omega_R)$ be a decision rule based only on the distributions $\{\mu_1,\mu_2,\ldots,\mu_M\}$ and $\{y_1,y_2,\ldots,y_N\}$. Then there exists a decision rule $\Lambda = (\Lambda_1, \Lambda_2, \ldots, \Lambda_J, \Lambda_R)$ based on the sufficient statistics $\Gamma_Y$ such that
\begin{eqnarray*}
\liminf_{n \to \infty} -\frac{1}{n} \log P_\Lambda(\err/\clH_\ell) &\geq&  \liminf_{n \to \infty} -\frac{1}{n} \log P_\Omega(\err/\clH_\ell), \\
\liminf_{n \to \infty} -\frac{1}{n} \log P_\Lambda(\rej/\clH_\ell) &\geq&  \liminf_{n \to \infty} -\frac{1}{n} \log P_\Omega(\rej/\clH_\ell),
\end{eqnarray*}
for all $\ell \in [J]$ and for all choices of $\clU$.
\qed
\end{lemma}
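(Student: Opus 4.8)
The plan is to exploit the fact that, under any i.i.d.\ law, two observation vectors with the same collection of empirical distributions are equiprobable, so that conditioning on $\Gamma_Y$ discards no information relevant to the exponents. First I would partition the observation space $\bfZ^1 = (\zstate^n)^N$ into \emph{type classes}: for each achievable value $T$ of the statistic $\Gamma_Y$, let $T$ also denote the set of all $\underline y$ with $\Gamma_Y(\underline y) = T$. The key observation is that for every hypothesis $\clH_\ell$ and every choice of underlying distributions $\clU$, the law $\Prob_{\clH_\ell}$ is a product i.i.d.\ measure, and the probability $\nu_j^{\otimes n}(s_j) = \prod_z \nu_j(z)^{n\Gamma_{s_j}(z)}$ of any string $s_j$ depends only on its type $\Gamma_{s_j}$. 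Consequently the conditional law of $\underline y$ given $\Gamma_Y \in T$ is \emph{uniform} on $T$, and this uniformity holds simultaneously for every $\clH_\ell$ and every $\clU$.

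Given this, I would build $\Lambda$ from $\Omega$ by a plurality vote within each type class. Writing $p_c(T) = |\Omega_c \cap T|/|T|$ for the fraction of $T$ that $\Omega$ assigns to cell $c \in [J] \cup \{R\}$, the uniformity above gives $\Prob_{\clH_\ell}\{\underline y \in \Omega_c \mid \Gamma_Y \in T\} = p_c(T)$ for every $\ell$ and every $\clU$. I define $\Lambda$ to send the entire class $T$ to the cell $c^\ast(T) = \argmax_{c} p_c(T)$; since $\Omega$ and hence the counts $|\Omega_c \cap T|$ do not depend on the hypothesis, $c^\ast(T)$ is well defined independently of $\ell$ and $\clU$, so $\Lambda$ is a legitimate rule based only on $\Gamma_Y$.

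The comparison of exponents then follows from a single elementary bound. Because there are $J+1$ cells, the plurality cell satisfies $p_{c^\ast(T)}(T) \geq 1/(J+1)$. Hence, whenever $\Lambda$ commits an error under $\clH_\ell$ on class $T$, i.e.\ $c^\ast(T) \in [J]\setminus\{\ell\}$, the total mass $\Omega$ places on wrong hypotheses obeys $\sum_{k \in [J],\, k\neq \ell} p_k(T) \geq p_{c^\ast(T)}(T) \geq 1/(J+1)$, so that $\ind\{c^\ast(T) \in [J]\setminus\{\ell\}\} \leq (J+1)\sum_{k\in[J],\, k\neq\ell} p_k(T)$; the same reasoning with the cell $R$ bounds the rejection indicator. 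Multiplying by $\Prob_{\clH_\ell}\{\Gamma_Y \in T\}$ and summing over $T$ yields $P_\Lambda(\err/\clH_\ell) \leq (J+1) P_\Omega(\err/\clH_\ell)$ and $P_\Lambda(\rej/\clH_\ell) \leq (J+1) P_\Omega(\rej/\clH_\ell)$ for every fixed $\clU$. Taking $-\frac1n\log(\cdot)$ and $\liminf_{n\to\infty}$, the constant factor contributes $-\frac1n\log(J+1) \to 0$, since $J = {M\choose K}{N\choose K}K!$ is independent of $n$, which gives the two claimed exponent inequalities for all $\ell$ and all $\clU$.

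I expect the only genuine subtlety to be the fact that the conditional law on each type class is uniform \emph{and hypothesis-free even though the hypotheses are composite}: the sequences in $\clN\setminus\clK$ may follow arbitrary laws, but for any fixed instantiation of those laws $\Prob_{\clH_\ell}$ is still a product measure and the conditional distribution given $\Gamma_Y$ stays uniform, which is precisely what lets one type-based rule $\Lambda$ dominate $\Omega$ simultaneously over all $\ell$ and all $\clU$. The identical construction, applied to the joint statistic $(\Gamma_{x_1},\ldots,\Gamma_{x_M},\Gamma_{y_1},\ldots,\Gamma_{y_N})$ on $\bfZ$, handles problem~\Prb{unknown}.
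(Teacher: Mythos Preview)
Your construction is the same as the paper's: partition $(\zstate^n)^N$ into joint type classes and reassign each class to the plurality cell of $\Omega$. Where you diverge is in the comparison step. The paper does not invoke conditional uniformity; instead it sandwiches $\Prob_{\clH_\ell}\{T\}$ between the two bounds of Lemma~\ref{lem:typeprob}, obtaining
\[
\Prob_{\clH_\ell}\{\Lambda_k\} \le \tau_n\, 2^{n\delta(n)}(J+1)\,\Prob_{\clH_\ell}\{\Omega_k\},
\]
with $\delta(n)=\tfrac{N|\zstate|\log(n+1)}{n}$ and $\tau_n$ the number of joint types, and then discards the polynomial prefactor in the exponent. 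Your route---observing that $\Prob_{\clH_\ell}(\cdot\mid\Gamma_Y=T)$ is uniform on $T$ for every $\ell$ and every instantiation of $\clU$, hence $p_c(T)=\Prob_{\clH_\ell}\{\Omega_c\mid T\}$ is hypothesis-free---gives the sharper non-asymptotic bound $P_\Lambda(\err/\clH_\ell)\le(J+1)P_\Omega(\err/\clH_\ell)$ with no type-counting or $\delta(n)$ slack at all. Both arguments are valid and both rely on $J$ being independent of $n$; yours is more elementary and actually yields a stronger finite-$n$ inequality, while the paper's version stays closer to the standard method-of-types template used elsewhere in the appendix.
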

We provide a proof in the appendix.
Thus this lemma suggests that if one is interested only in optimizing error exponents and rejection exponents, then tests based only on $\Gamma_Y$ are sufficient.

Following Gutman \cite{gut89}, in order to prove optimality we allow for a no-match zone, i.e., we allow a decision in favor of rejecting all the $M$ hypotheses.
For this purpose, we need to identify the hypothesis corresponding to the second minimum weight matching in $\clG$.
Let
\begin{equation}
\wtilde \clH  = \wtilde \clH(\underline{x},\underline{y}) = \argmin_{\clH_\ell \neq \what \clH} D(\clH_\ell)\label{eqn:optmatch31}
\end{equation}
where $\what \clH$ is defined in~\eqref{eqn:optmatch1}.
The choices of $\what \clH$ and $\wtilde \clH$ have a simple interpretation in terms of maximum \emph{generalized likelihoods} \cite{leh05} as shown in the lemma below.
\begin{lemma}\label{lem:ML}
The selections $\what \clH$ defined in \eqref{eqn:optmatch1} and $\wtilde \clH$ defined in \eqref{eqn:optmatch31} can be expressed as
\begin{eqnarray}
\what \clH = \clH_{\what \ell}  \quad \mbox{ and } \quad \wtilde \clH = \clH_{\wtilde \ell} \label{eqn:HhatML}
\end{eqnarray}
where
\begin{eqnarray*} 
\what \ell &=& \argmax_{\ell \in [J]} \max_{\substack{\clN \subset \pz\\\clM \cap \clN = \clK} }\Prob_{\clH_\ell} (y_1, y_2, \ldots, y_N)\\
\wtilde \ell &=&  \argmax_{\ell \in [J]: \clH_\ell \neq \what \clH} \max_{\substack{\clN \subset \pz\\\clM \cap \clN = \clK} }\Prob_{\clH_\ell} (y_1, y_2, \ldots, y_N).
\end{eqnarray*}
%
\qed
\end{lemma}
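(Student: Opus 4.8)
The plan is to show that minimizing the matching weight $D(\clH_\ell)$ is equivalent to maximizing a generalized likelihood, by computing the maximum likelihood over the composite hypothesis explicitly via the method of types. The key observation is that under hypothesis $\clH_\ell$, the sequences $\{y_j : (i,j) \in \match_\ell\}$ are matched to known distributions $\mu_i$, while the remaining $N-K$ sequences come from sources in $\clN \setminus \clK$ whose distributions are unconstrained in $\pz$. First I would write out $\Prob_{\clH_\ell}(y_1, \ldots, y_N)$ as a product over the $N$ sequences, using independence: for a matched edge $(i,j) \in \match_\ell$ the factor is $\mu_i(y_j)$, and for an unmatched sequence $y_j$ the factor is $\nu(y_j)$ for some free $\nu \in \pz$.

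Next I would carry out the inner maximization over $\clN$ subject to $\clM \cap \clN = \clK$. Since the distributions governing the unmatched sequences are free, each such factor $\nu(y_j)$ is maximized independently by choosing $\nu = \Gamma_{y_j}$, the empirical distribution, because the maximum-likelihood estimate of an i.i.d. source law is its type. At that optimum, $\log \mu(y_j) = -n(H(\Gamma_{y_j}) + D(\Gamma_{y_j} \| \mu))$ for any $\mu$, and with $\mu = \Gamma_{y_j}$ the divergence term vanishes, leaving $-nH(\Gamma_{y_j})$. I would record the resulting expression
\[
\max_{\substack{\clN \subset \pz \\ \clM \cap \clN = \clK}} \Prob_{\clH_\ell}(y_1, \ldots, y_N) = 2^{-n\left(\sum_{(i,j)\in\match_\ell}\left[H(\Gamma_{y_j}) + D(\Gamma_{y_j}\|\mu_i)\right] + \sum_{j \text{ unmatched}} H(\Gamma_{y_j})\right)}.
\]

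The final step is to observe that the sum $\sum_{j=1}^N H(\Gamma_{y_j})$ does not depend on $\ell$, since it ranges over all $N$ sequences regardless of which edges are in the matching; only the term $\sum_{(i,j)\in\match_\ell} D(\Gamma_{y_j}\|\mu_i) = D(\clH_\ell)$ varies with $\ell$. Therefore maximizing the generalized likelihood over $\ell$ is exactly minimizing $D(\clH_\ell)$, which gives $\what\ell = \argmin_\ell D(\clH_\ell)$ and hence $\what\clH = \clH_{\what\ell}$; the same argument restricted to $\clH_\ell \neq \what\clH$ yields the claim for $\wtilde\clH$.

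The main obstacle, and the point requiring care, is justifying that the entropy terms for the unmatched sequences genuinely cancel across hypotheses. One must confirm that every hypothesis $\clH_\ell$ assigns exactly $K$ matched and $N-K$ unmatched sequences, so that the constant $\sum_{j=1}^N H(\Gamma_{y_j})$ is identical for all $\ell$; this is guaranteed by the problem setup, since each hypothesis corresponds to a cardinality-$K$ matching. A secondary subtlety is the constraint $\clM \cap \clN = \clK$: I would verify that setting the unmatched sources' laws to the empirical types is consistent with this constraint, which holds because the constraint only fixes the $K$ matched distributions and leaves the $N-K$ unmatched ones free to equal their types (generically distinct from the $\mu_i$). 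Modulo these bookkeeping checks, the equivalence is a direct consequence of the identity $\mu(s) = 2^{-n(H(\Gamma_s) + D(\Gamma_s\|\mu))}$.
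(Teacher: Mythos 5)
Your proposal is correct and follows essentially the same route as the paper's proof: factor the likelihood over matched and unmatched sequences, maximize the free laws by their empirical types, apply the identity $\mu(s) = 2^{-n(H(\Gamma_s)+D(\Gamma_s\|\mu))}$, and observe that the entropy terms $\sum_{j}H(\Gamma_{y_j})$ are hypothesis-independent so the argmax reduces to $\argmin_\ell D(\clH_\ell)$. Your remark on the consistency of the choice $\nu=\Gamma_{y_j}$ with the constraint $\clM\cap\clN=\clK$ is a point the paper passes over silently, but it does not change the conclusion.
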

The above lemma is proved in the appendix.
It is easy to be see that in the special case that $M \geq N = K$, the set $\clN$ is fixed and thus the second minimization over the choice of distributions in $\clN$ is not necessary.
In such a case the choice of $\what \clH$ can be interpreted as a simple maximum likelihood hypothesis.
The optimal test with rejection can be described in terms of $\what H$ and $\wtilde H$ as shown in the following theorem.
\begin{theorem}\label{thm:opt1}
Let $\clM = \{\mu_1,\mu_2,\ldots,\mu_M\} \subset \pz$ be a known set of $M$ distinct probability distributions on the finite alphabet $\zstate$.
Let $\Omega = (\Omega_1, \Omega_2,  \ldots, \Omega_J, \Omega_R)$ be a decision rule based on the collection $\Gamma_Y$ of empirical distributions
such that 
\begin{eqnarray}
P_\Omega(\err/\clH_\ell) \leq 2^{-\lambda n}, \mbox{ for all }\ell \in [J] \label{eqn:errprobconstraint}
\end{eqnarray}
and for all choices of distributions in $\clN \setminus \clK$.

Let $\wtilde \lambda = \lambda - \frac{N|\zstate| \log(n+1)}{n}$,
\begin{eqnarray*}
\Lambda_\ell =  \{\underline y: D(\wtilde \clH) \geq \wtilde \lambda, \what \clH = \clH_\ell\}, \ell \in [J],
\end{eqnarray*}
and
\[
\Lambda_R = \{\underline y: D(\wtilde \clH) < \wtilde \lambda\}.
\]
Then
\begin{eqnarray}
\liminf_{n \to \infty} -\frac{1}{n} \log P_\Lambda(\err/\clH_\ell) \geq \lambda, \quad \ell \in [J], \forall \clU \subset \pz \label{eqn:thmclaim11}
\end{eqnarray}
and
\begin{eqnarray}
\Lambda_R \subset \Omega_R. \label{eqn:thmclaim21}
\end{eqnarray}
\qed
\end{theorem}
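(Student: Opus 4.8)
The plan is to prove the two assertions separately, using the method of types (Lemma~\ref{lem:typeprobnew}) as the common engine: a per-type upper bound drives the error-exponent claim \eqref{eqn:thmclaim11}, and a per-type lower bound, combined with an adversarial choice of the free source laws, drives the rejection-optimality claim \eqref{eqn:thmclaim21}. The threshold offset $\wtilde\lambda = \lambda - \frac{N|\zstate|\log(n+1)}{n}$ is chosen precisely to absorb the polynomial type-counting factor $(n+1)^{N|\zstate|}$, and I expect it to make the second claim an exact set containment rather than merely an asymptotic statement.

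For the error exponent, I would first observe that under $\Lambda$ an error under $\clH_\ell$ forces $D(\clH_\ell)\geq\wtilde\lambda$. Indeed, to accept some $\clH_k\neq\clH_\ell$ the rule requires $\what\clH=\clH_k$ and $D(\wtilde\clH)\geq\wtilde\lambda$; since $\clH_\ell$ is then a competitor for the \emph{second} minimum weight, $D(\clH_\ell)\geq D(\wtilde\clH)\geq\wtilde\lambda$. Hence $P_\Lambda(\err/\clH_\ell)\leq \Prob_{\clH_\ell}\{D(\clH_\ell)\geq\wtilde\lambda\}$, and crucially this quantity depends only on the $K$ matched strings, whose laws are the fixed $\mu_i$ with $(i,j)\in\match_\ell$; it therefore does not involve the unmatched laws and the bound holds uniformly over $\clU$. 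Summing the upper bound $2^{-nD(\Gamma_{y_j}\|\mu_i)}$ of Lemma~\ref{lem:typeprobnew} (using independence of the matched strings) over all type vectors with $\sum_{(i,j)\in\match_\ell}D(\Gamma_{y_j}\|\mu_i)\geq\wtilde\lambda$ gives $\Prob_{\clH_\ell}\{D(\clH_\ell)\geq\wtilde\lambda\}\leq (n+1)^{K|\zstate|}2^{-n\wtilde\lambda}$, whose normalized exponent tends to $\wtilde\lambda\to\lambda$, yielding \eqref{eqn:thmclaim11}.

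For the rejection optimality I would argue the contrapositive of $\Lambda_R\subset\Omega_R$: suppose $\underline y^{*}\in\Lambda_R$, so $D(\wtilde\clH(\underline y^*))<\wtilde\lambda$, yet $\underline y^{*}\notin\Omega_R$, so $\Omega$ accepts some $\clH_k$ at $\underline y^{*}$. Then $\what\clH$ and $\wtilde\clH$ are two distinct hypotheses, both with weight below $\wtilde\lambda$, so one of them, call it $\clH_m\in\{\what\clH,\wtilde\clH\}$, satisfies $\clH_m\neq\clH_k$ and $D(\clH_m)<\wtilde\lambda$. Because $\Omega$ depends only on $\Gamma_Y$, the cell $\Omega_k$ contains the whole type class $\{\Gamma_Y=\Gamma_{Y^*}\}$. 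I would then lower-bound the probability of this event when the true hypothesis is $\clH_m$: the $K$ matched strings contribute $2^{-n(D(\clH_m)+K|\zstate|\log(n+1)/n)}$ via the lower bound of Lemma~\ref{lem:typeprobnew}, while for each of the $N-K$ free strings I choose the source law equal to the observed type $\Gamma_{y_j^*}$, contributing a factor $(n+1)^{-|\zstate|}$ each. Multiplying gives $\Prob_{\clH_m}\{\Gamma_Y=\Gamma_{Y^*}\}\geq 2^{-nD(\clH_m)}(n+1)^{-N|\zstate|} > 2^{-n\wtilde\lambda}(n+1)^{-N|\zstate|} = 2^{-n\lambda}$, so $P_\Omega(\err/\clH_m)>2^{-\lambda n}$ for this choice of free laws, contradicting \eqref{eqn:errprobconstraint}.

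The main obstacle is this second step, specifically the adversarial construction of the free source laws and the distinctness requirement on $\clN$: setting a free law exactly equal to $\Gamma_{y_j^*}$ may collide with some $\mu_i$ or with another observed type, violating $\clM\cap\clN=\clK$ and the distinctness of the sources. I would handle this with a short continuity argument, perturbing each free law to a nearby admissible distribution and invoking continuity of $D(\cdot\|\cdot)$ so that the strict inequality $>2^{-n\lambda}$ survives. The remaining care is purely bookkeeping: tracking the $(n+1)^{N|\zstate|}$ type-counting factor so that it is cancelled exactly by the $\wtilde\lambda$ offset, which is what upgrades the conclusion from an asymptotic inequality to the clean containment $\Lambda_R\subset\Omega_R$.
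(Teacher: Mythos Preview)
Your proposal is correct and follows the paper's overall strategy, with one notable simplification. For the rejection-optimality part \eqref{eqn:thmclaim21}, your contrapositive argument is exactly the paper's argument in disguise: the paper shows directly that $\cup_{j\neq\ell}\Omega_j\subset\wtilde\Lambda_\ell:=\{\underline y:D(\clH_\ell)\geq\wtilde\lambda\}$ for each $\ell$ by the same type lower bound and the same adversarial choice $q_j^y=\Gamma_{y_j'}$ for the free indices, then takes intersections to conclude $\Lambda_R\subset\Omega_R$. Your extra care about distinctness of the sources (the perturbation fix) is a point the paper simply glosses over; it is a real technicality and your continuity patch is the right way to close it.

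For the error-exponent part \eqref{eqn:thmclaim11} you take a more elementary route than the paper. The paper interprets the matched block of $\underline y$ as a sequence on the product alphabet, applies Sanov's theorem to the joint type, and then invokes Lemma~\ref{lem:KLofunion} to reduce the Sanov exponent for the event $\{\sum_{(i,j)\in\match_\ell}D(\Gamma_{y_j}\|\mu_i)\geq\wtilde\lambda\}$ back to a sum of marginal divergences. You bypass both Sanov and Lemma~\ref{lem:KLofunion} by working coordinatewise: the upper bound of Lemma~\ref{lem:typeprobnew} on each of the $K$ matched strings, multiplied and summed over at most $(n+1)^{K|\zstate|}$ type vectors, already yields $(n+1)^{K|\zstate|}2^{-n\wtilde\lambda}$. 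This is shorter and self-contained, at the cost of not exhibiting the connection to Sanov that the paper uses elsewhere (e.g.\ for the rejection-exponent computations in Section~\ref{sec:compknown}).
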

We provide a proof to the theorem in the appendix.
From the definition of $D(\clH_\ell)$ in \eqref{eqn:matchweight} it is clear that the decision regions $\Lambda_\ell$'s and $\Lambda_R$ proposed in Theorem \ref{thm:opt1} depends on the sequences in $\underline y$ only through $\Gamma_Y$.
An illustration of the various decision regions of the optimal test as functions of $\Gamma_Y$ is provided in Figure~\ref{fig:regions} for a specific example.
From the condition of \eqref{eqn:thmclaim21} it is obvious that the probability of rejection of the decision rules under the decision rule $\Lambda$ is lower than the probability of rejection under the decision rule $\Omega$.
Thus the optimality result implies that the test $\Lambda$ has lower rejection probabilities, as defined in \eqref{eqn:rejprob}, than any test $\Omega$ that satisfies an exponential decay of error probabilities as in \eqref{eqn:errprobconstraint}.


\begin{figure}[h]
\centering
\psfrag{p1}{$(\mu_1,\mu_2)$}
\psfrag{p2}{$(\mu_2,\mu_1)$}
\psfrag{Q1}{$T_1$}
\psfrag{Q2}{$T_2$}
\psfrag{Q3}{$T_3$}
\psfrag{Q4}{$T_4$}
\psfrag{L1}{$L_1$}
\psfrag{L2}{$L_2$}
\psfrag{L3}{$L_3$}
\psfrag{L4}{$L_4$}
\psfrag{LR}{$L_R$}
\psfrag{B}{$B_{12}$}
\psfrag{pz}{$(\pz)^2$}
\includegraphics[width=.6\columnwidth]{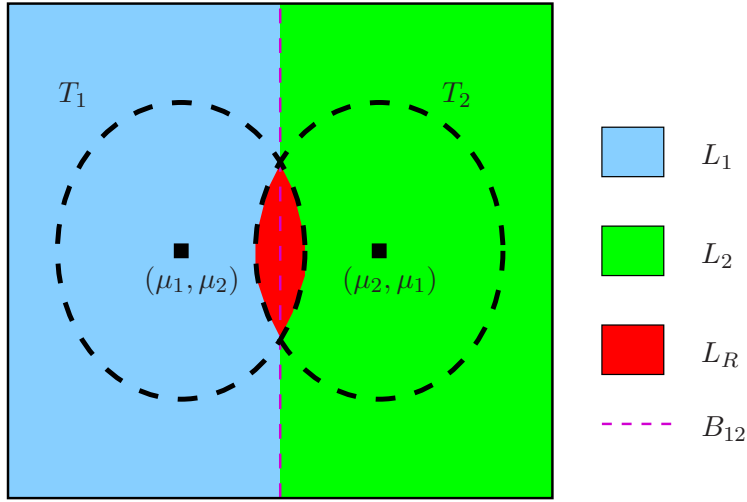}
\caption{Illustration of the decision regions for the optimal test of Theorem~\ref{thm:opt1} for $M=N=K=2$, which means that $J=2$. Here we define $T_\ell =\{(\nu_1,\nu_2): \sum_{(i,j) \in \match_\ell} D(\nu_{j} \| \mu_i)  < \lambda\}$.
Assuming $\match_1$ denotes the matching in which $y_i$ is matched to $\mu_i$, and $\match_2$ denotes the matching in which $y_1$ is matched to $\mu_2$ and $y_2$ to $\mu_1$, it follows that $(\mu_1,\mu_2) \in T_1$ and $(\mu_2,\mu_1) \in T_2$ .
We use $B_{12}$ to denote the hyperplane separating $L_1$ and $L_2$ defined as $B_{12}= \{(\nu_1,\nu_2): \sum_{(i,j) \in \match_1} D(\nu_{j} \| \mu_i) = \sum_{(i,j) \in \match_2} D(\nu_{j} \| \mu_i)\}$, or equivalently, $B_{12}= \{(\nu_1,\nu_2): \sum_{z \in \zstate} (\nu_1(z) - \nu_2(z)) \log \frac{\mu_1(z)}{\mu_2(z)} = 0\}$. 
Then the optimal decision regions of Theorem~\ref{thm:opt1} can be expressed as $\Lambda_i = \{\underline y: \Gamma_Y \in L_i\}$ for $i \in \{1,2,R\}$, where $L_i$ are as shown in the figure.
Furthermore, if $\Gamma_Y$ lies to the left of $B_{12}$ in the figure, then $\what \clH = \clH_1$ and if $\Gamma_Y$ lies to the right of $B_{12}$ then $\what \clH = \clH_2$.}\label{fig:regions}
\end{figure}

The test can be explained in words as follows.
First identify the hypotheses corresponding to the minimum weight matching and the second minimum weight matching of cardinality $K$ in $\clG$.
Accept the former hypothesis if the weight corresponding to the latter exceeds the threshold $\wtilde \lambda$, and reject all hypotheses if the threshold is not exceeded.
When $M \geq N =K$, the result of Lemma~\ref{lem:ML} implies that this test leads to a rejection if the weights corresponding to the two most likely hypotheses, $\what \clH$ and $\wtilde \clH$, are below a threshold, or equivalently, if the observations can be  well-explained by two or more hypotheses.

We note that the threshold $\wtilde \lambda$ appearing in the definition of $\Lambda_R$  satisfies $\wtilde \lambda \to \lambda$ as $n \to \infty$.
Using Sanov's theorem we show in the proof that the choice of decision regions ensures that the error-exponent constraint of \eqref{eqn:thmclaim11} is satisfied.
We also observe that the offset between $\wtilde \lambda$ and $\lambda$ is just $N$ times the offset in the exponent appearing in the first inequality of \eqref{eqn:typeineq} which bounds the probability of observing a type.
This offset is introduced to ensure that the condition \eqref{eqn:thmclaim21} is satisfied, as detailed in the proof.

\subsection{Comparison with the unconstrained problem}\label{sec:compknown}
As we mentioned earlier, the problem studied in this section differs from ordinary multiple hypothesis testing because of the prior knowledge that the strings in $\clS$ were generated by distinct sources.
It is interesting to compare the performance obtained by using the optimal test that makes use of this information with the performance of the optimal test that one would have to use in the absence of this prior information.
Before we proceed we need to introduce some new notations.
Let $\pi, \pi_1,\pi_2 \in \clP(\ystate)$ be distinct probability mass functions with complete supports on $\ystate$.
For any $\eta \geq 0$ we define
\begin{equation}
Q_\eta(\pi) := \{\nu \in \clP(\ystate): D(\nu \| \pi) < \eta \}, \label{eqn:qdefn}
\end{equation}
and
\begin{equation}
E_\eta({\pi_1,\pi_2}) = \sup \{\beta \geq 0: Q_\beta(\pi_2)\cap Q_\eta(\pi_1) =\emptyset\}. \label{eqn:errexp}
\end{equation}
The function $E_\eta({\pi_1,\pi_2})$ is strictly monotonically decreasing in $\eta$ in the interval $\eta \in (0, D(\pi_2\|\pi_1))$ (see, e.g., \cite[Sec. 11.7]{covtho06}, \cite[Sec. 3.2]{lev08}, and \cite{bla74}).
Moreover, if $\eta \geq D(\pi_2 \| \pi_1)$, then $E_\eta({\pi_1,\pi_2}) = 0$.
The Chernoff information \cite{covtho06} between $\pi_1$ and $\pi_2$ is defined as
\[
C(\pi_1,\pi_2):= - \inf_{\alpha \in [0,1]} \log\left( \sum_{y \in \ystate} \pi_1^\alpha(y) \pi_2^{1-\alpha}(y) \right).
\]
It is well known \cite{covtho06,lev08} that
\[
E_{C(\pi_1,\pi_2)}({\pi_1,\pi_2}) = C(\pi_1,\pi_2).
\]
These quantities are illustrated in Figure~\ref{fig:qballs}.

\begin{figure}[h]
\centering
\psfrag{p1}{$\pi_1$}
\psfrag{p2}{$\pi_2$}
\psfrag{Q1}{$Q_\eta(\pi_1)$}
\psfrag{Q2}{$Q_\beta(\pi_2)$}
\psfrag{Pz}{$\py$}
\includegraphics[width=.4\columnwidth]{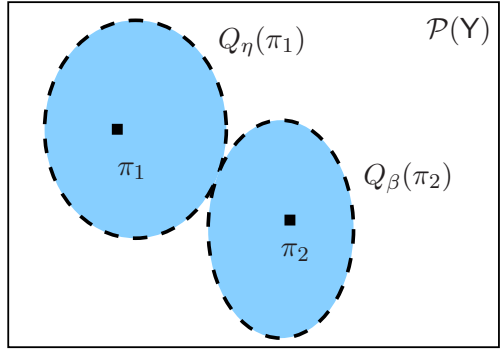}
\caption{Illustration of the $Q_\eta$ sets defined in \eqref{eqn:qdefn} on the probability simplex $\py$. In the above figure, the sets $Q_\eta(\pi_1)$ and $Q_\beta(\pi_2)$ touch each other and hence $\beta = E_\eta(\pi_1,\pi_2)$ as defined in \eqref{eqn:errexp}. If in addition $\beta = \eta$ then $\eta = C(\pi_1,\pi_2)$.} \label{fig:qballs}
\end{figure}

Below we provide two comparisons, the first being a comparison of the rejection regions in the two cases, and the second, a comparison of the error exponents obtained by tests that do not allow for a rejection region.

\subsubsection{Comparison of rejection regions}\label{sec:compknownrej}
In the absence of prior knowledge that the strings in $\clS$ were generated by distinct sources, one is forced to repeatedly perform optimal multihypothesis testing on each string in $\clS$.
In other words, for each string $s$ in $\clS$, one repeats the optimal solution of Theorem \ref{thm:opt1} assuming that the second set $\clS$ is a singleton comprising the single string $s$.
These individual solutions can then be combined to obtain a solution to the original problem as follows.

Assume $M \geq N = K$.
Consider the function $\widehat \sigma: [N] \mapsto [M]$ defined by
\begin{equation}
\widehat \sigma(i) = \argmin_{j\in [M]} D(\Gamma_{y_i}\|\mu_j), \mbox{ for all } i \in [N]. \label{eqn:ucsigmahat}
\end{equation}
Thus the function $\widehat \sigma$ gives the best matching of each string to one of the sources in $\clM$.
In fact, it can be shown that $\widehat \sigma(i)$ is the maximum likelihood source that produced $y_i$, just as in Lemma~\ref{lem:ML}.
If $\what \sigma$ is a one-to-one function, then it corresponds to a valid hypothesis for the matching problem.
We call this hypothesis $\clH^{\widehat \sigma}$.
If ${\widehat \sigma}$ is not a one-to-one function, or if $\clH^{\widehat \sigma}$ does not correspond to the true hypothesis, then the strings are not correctly matched and hence in this case an error occurs.
Furthermore, in order to satisfy the error exponent constraint, one is forced to reject whenever the individual hypothesis test on any of the $N$ strings leads to a rejection.
Let
\[
\widetilde w_i = \min_{j\in [M]\setminus{{\widehat \sigma}(i)}} D(\Gamma_{y_i}\|\mu_j), \quad i \in [N]
\]
The solution to the overall problem is now given by
\begin{eqnarray}
\Lambda_\ell^{\uc} =  \{\underline y: \min_{i \in [N]} \widetilde w_i \geq \widecheck \lambda, \clH^{\widehat \sigma}= \clH_\ell\}, \ell \in [J]\label{eqn:decreguc}
\end{eqnarray}
where the superscript of $\uc$ indicates that the solution is unconstrained and
\begin{eqnarray}
\Lambda_R^{\uc} = \{\underline y: \min_{i \in [N]} \widetilde w_i < \widecheck \lambda\} \label{eqn:rejuc}
\end{eqnarray}
where $\widecheck \lambda = \lambda - \frac{|\zstate| \log(n+1)}{n}$, is the optimal choice for the threshold obtained from Theorem \ref{thm:opt1} when the set of strings is a singleton.
The probability of error of this solution is given by
\begin{eqnarray*}
\lefteqn{P_{\Lambda^{\uc}}(\err|\clH_\ell)} \nonumber\\
 &=& \Prob_{\clH_\ell}\{y_i \mbox{ is incorrectly matched for some }i \in [N]\}\\
&\leq& \sum_{i=1}^N \Prob_{\clH_\ell}\{y_i \mbox{ is incorrectly matched}\}
\end{eqnarray*}
where the inequality follows via the union bound.
By the result of Theorem \ref{thm:opt1}, each term in the above summation decays exponentially in $n$ with exponent $\lambda$ and thus we have
\[
\liminf_{n \to \infty} -\frac{1}{n} \log P_{\Lambda^{\uc}}(\err|\clH_\ell) \geq \lambda.
\]
Thus this solution meets the same error exponent constraint as the optimal solution of Theorem \ref{thm:opt1} that one can use when the constraint on the strings is known a priori.
However, the rejection regions for the optimal test is a strict subset of the rejection region of \eqref{eqn:rejuc}, as is evident from the conclusion of Theorem \ref{thm:opt1}.
For large $n$, we have $\widetilde \lambda \approx \widecheck \lambda \approx \lambda$, and thus the sizes of the rejection regions can be significantly different.
The significance can be quantified by comparing the probability of rejection of the two tests.
For large $n$, it is easier to look at the large deviations behavior of these probabilities for which we use rejection exponents.
To keep the presentation simple,  in the rest of this section, we focus on the setting in which $M=N=K$.
Furthermore, we assume that the probability distributions in $\clM$ are distinct.

Let $\sigma^i \in \Sym([N]), i \in [J]$ where $J = N!$ denote an enumeration of all possible bijections from $[N]$ onto itself, i.e.,
\[
\sigma^i: [N] \mapsto [N], \quad i \in [J]
\]
represents a unique permutation of $[N]$ for each $i$.
For each $i$ let $\mu^{\sigma^i}$ denote the product distribution $\mu_{\sigma^i(1)} \times \mu_{\sigma^i(2)} \times \ldots \mu_{\sigma^i(N)}$.
It follows, via a straightforward application of Sanov's theorem that the rejection exponents of the optimal test given in Theorem \ref{thm:opt1} and the test $\Lambda^{\uc}$ given in \eqref{eqn:decreguc} and \eqref{eqn:rejuc} can be expressed as follows:
\begin{eqnarray}
\lefteqn{\liminf_{n \to \infty} -\frac{1}{n} \log P_\Lambda(\rej/\clH_\ell)}\nonumber\\
&=& \left\{ \begin{tabular}{cc}
$\displaystyle \min_{\substack{
    i,j\in[J] \\
   i \neq j
  }}  E_\lambda(\mu^{\sigma^i},\mu^{\sigma^j})$& if $C^* < \lambda $\\
$\infty$ &else
\end{tabular} \right.\label{eqn:rejexpopt}
\end{eqnarray}
where $C^*=\displaystyle \min_{\substack{
     i,j \in [J] \\
   i \neq j
  }}  C(\mu^{\sigma^i},\mu^{\sigma^j})$,
and
\begin{eqnarray}
\lefteqn{\liminf_{n \to \infty} -\frac{1}{n} \log P_{\Lambda^{\uc}}(\rej/\clH_\ell)}\nonumber\\
&=& \left\{ \begin{tabular}{cc}
$\displaystyle \min_{\substack{
    i,j \in [N] \\
   i \neq j
  }} E_\lambda(\mu_{i}, \mu_{j})$& if $C^{\uc*} < \lambda $\\
$\infty$ &else
\end{tabular} \right.\label{eqn:rejexpoptuc}
\end{eqnarray}
where $C^{\uc*}=\displaystyle \min_{\substack{
    i,j \in [N] \\
   i \neq j
  }}  C(\mu_{i}, \mu_{j})$.
Thus the important quantities that determine the rejection exponent in the former case are the minimum value of the $E_\lambda$ function and Chernoff information measured between pairs of $\mu^{\sigma^i}$ distributions, and in the latter case, the same functions measured between pairs of $\mu_i$ distributions.
These quantities can differ significantly as we illustrate in Example~\ref{eg:bernoulli} later in the paper.

\subsubsection{Comparison of error probabilities without rejection}
An alternative version of the problem studied in this section is to try to identify an optimal test that does not allow rejection as a test outcome.
When $M \geq N = K$, the problem studied here is just a standard multihypothesis testing problem with $J$ different hypotheses, one corresponding to each permutation of $N$ distributions from the set $\{\mu_1,\mu_2,\ldots, \mu_M\}$.
In this setting, the solution $\what \clH$ given by \eqref{eqn:optmatch1} is in fact the maximum-likelihood solution as shown in \eqref{eqn:HhatML} in Lemma~\ref{lem:ML}.
Also, by applying Lemma~\ref{lem:ML} to the setting in which $N=1$, it follows that the solution $\widehat \sigma$ of \eqref{eqn:ucsigmahat} can be expressed as
\begin{equation}
\what \sigma(i) = \argmax_{j \in [M]} \mu_j(y_i) . \label{eqn:sigmahatML}
\end{equation}
Thus the solution $\what \sigma$ of \eqref{eqn:ucsigmahat} is the maximum likelihood (ML) solution for the problem studied in this paper when the constraint on the strings is unknown, i.e., it is the ML solution when each string has to be independently classified  to one of the sources without any constraints.
For classical multihypothesis testing problems without rejection, the maximum likelihood solution is known to be asymptotically optimal
in terms of maximizing the worst-case error exponent \cite{leajoh97} among all hypotheses.
Furthermore, the value of the error exponent is given by the Chernoff information \cite{leajoh97}.
In fact it is straightforward to show that
\begin{equation}
\liminf_{n \to \infty} -\frac{1}{n} \log \Prob_{\clH_\ell}\{\what \clH \neq \clH_\ell\} = C^*, \mbox{ for all } \ell \in [J]. \label{eqn:errexpnorej}
\end{equation}
Furthermore, if $\sigma^\ell \in \Sym([N])$ denotes the permutation function such that $y_i$ is drawn from source $\mu_{\sigma^\ell(i)}$ under hypothesis $\clH_\ell$, then
\begin{equation}
\liminf_{n \to \infty} -\frac{1}{n} \log \Prob_{\clH_\ell}\bigcup_{i=1}^N\{\what \sigma (i) \neq \sigma^\ell(i)\} = C^{\uc*}, \mbox{ for all } \ell \in [J] \label{eqn:errexpnorejuc}
\end{equation}
where $\what \sigma$ is given by \eqref{eqn:ucsigmahat}.

Comparing with \eqref{eqn:rejexpopt} and \eqref{eqn:rejexpoptuc} we see that the the Chernoff informations $C^*$ and $C^{\uc*}$ which determine the error exponents for these tests are equal to the critical values of the error exponent constraints $\lambda$ in the test with rejection, below which the rejection exponent is $\infty$.
In the following example we show that the Chernoff informations $C^*$ and $C^{\uc*}$ for the constrained and unconstrained problems can be significantly different.
Thus the optimal error exponents and rejection exponents in the constrained setting can be significantly higher  than those in the constrained setting.

\begin{exmp}\label{eg:bernoulli}
As a simple example, suppose $M=N=K=2$, and $\zstate=\{0,1\}$.
Let $\mu_1$ be given by the Bernoulli distribution with parameter $\half$ and $\mu_2$ a Bernoulli distribution with parameter $\rho$.
In this case $J=N!= 2$ and the two possible permutations are $\sigma^1$ and $\sigma^2$ where $\sigma^1$ is the identity function on $\{1,2\}$ and
\[
\sigma^2(1) = 2 \quad \mbox{ and }\sigma^2(2) = 1.
\]
In this case the distributions $\mu^{\sigma^1} = \mu_1 \times \mu_2$ and $\mu^{\sigma^2} = \mu_2 \times \mu_1$.
These distributions are illustrated in Table~\ref{tab:pmfsilled}.
\begin{table}[ht]
\caption{Probability mass functions illustrated}
\centering
\subtable[PMFs $\mu_1$ and $\mu_2$]{
\begin{tabular}{  | c | c| c| }
\hline
& $0$ & $1$ \\
\hline
$\mu_1$ & $\half$ & $\half$ \\
$\mu_2$ & $1-\rho$ & $\rho$ \\
\hline 
\end{tabular}
}
%
%
\subtable[PMFs $\mu^{\sigma^1}$ and $\mu^{\sigma^2}$]{
\begin{tabular}{  |c | c | c | c | c |}
\hline
 &$00$ & $01$ &$10$ & $11$\\
\hline
$\mu^{\sigma^1}$ & $\half (1-\rho)$ & $\half (\rho)$ & $\half (1-\rho)$&$\half (\rho)$\\
$\mu^{\sigma^2}$ & $(1-\rho) \half$ & $(1-\rho) \half $ & $(\rho) \half$&$(\rho) \half$\\
\hline 
\end{tabular}
}
\label{tab:pmfsilled} 
\end{table}

According to the definitions, in this case, the Chernoff informations are given by
\[
C^* = C(\mu^{\sigma^1}, \mu^{\sigma^2}) \quad \mbox{ and } \quad C^{\uc*}= C(\mu_1 ,\mu_2).
\]
These quantities are illustrated as a function of $\rho$ in Figure~\ref{fig:compchern}.
\begin{figure}[h]
\centering
\includegraphics[width=0.8\columnwidth]{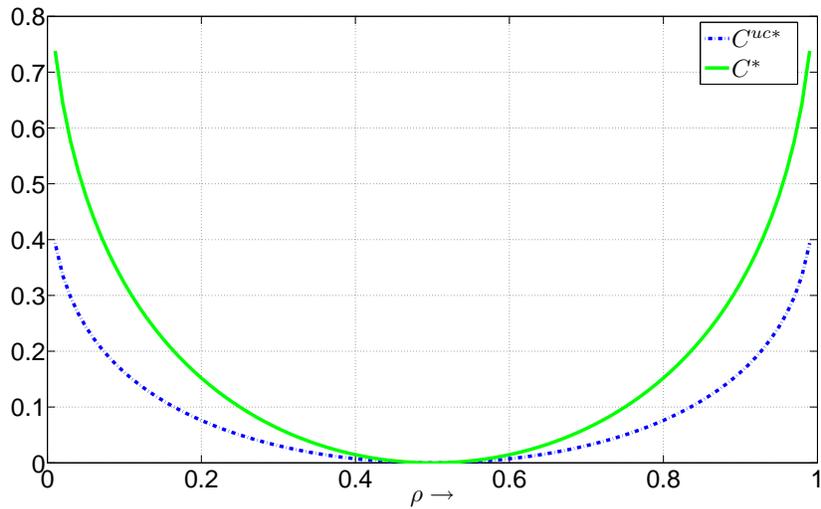}
\caption{Comparison of Chernoff informations $C^{\uc*}$ and $C^*$ for a simple example with $M=N=K=2$ and $\mu_1$ given by the Bernoulli distribution with parameter $\half$ and $\mu_2$ the Bernoulli distribution with parameter $\rho$.} \label{fig:compchern}
\end{figure}
As we see in the figures these quantities are different for all $\rho \neq \half$ and the difference can be significant.

\qed
\end{exmp}

Thus we conclude from Example~\ref{eg:bernoulli} and the results of \eqref{eqn:rejexpopt}, \eqref{eqn:rejexpoptuc}, \eqref{eqn:errexpnorej}, and \eqref{eqn:errexpnorejuc}, that by using the unconstrained solution rather than the constrained solution we get a performance improvement in terms of the error exponent.
However, the unconstrained solution has a practical advantage over the optimal solution in terms of the computational complexity of the algorithm for determining the solution, as we elaborate in Section \ref{sec:conc}.



\section{Optimal Matching with unknown distributions}\label{sec:optunknown}
In this section we solve problem~\Prb{unknown}.
The structure of the solution is very similar to that we obtained in Section~\ref{sec:optknown}.
As before we use the optimality criterion based on error exponents given in \eqref{eqn:criterion}.
In this section we use $\clG$ to denote the graph in Figure~\ref{fig:compbip} with $\clV_1$ and $\clV_2$ respectively representing $\clS_1$ and $\clS_2$, both of which are described in the statement of problem~\Prb{unknown}.
Let $\match_\ell \subset \{1,2,\ldots,M\} \times \{1,2,\ldots,N\}$ with $|\match_\ell| = K$ denote the matching on $\clG$ under hypothesis $\clH_\ell$.
Analogous to our notation in Section~\ref{sec:optknown}, edge $e \in \match_\ell$ can be represented as $e = (e_1,e_2)$ with the understanding that the edge connects $x_{e_1}$ and $y_{e_2}$ in graph $\clG$.
Recall that $\clM$ ($\clN$) represents the probability distributions followed by the $M$ ($N$) sources that produced the sequences in $\clS_1$ ($\clS_2$), and that $\clM \cap \clN = \clK$ with $|\clK| = K$.
Thus there are $M-K$ sequences in $\clS_1$ and $N-K$ sequences in $\clS_2$ that are not produced by sources in $\clK$.

Let
\begin{eqnarray}
D(\clH_\ell)&=& \sum_{(i,j) \in \match_\ell} D(\Gamma_{x_{i}} \| \half({\Gamma_{x_{i}} + \Gamma_{y_j}}) ) \nonumber \\
&& \qquad \qquad +D(\Gamma_{y_j} \| \half({\Gamma_{x_{i}} + \Gamma_{y_j}})).\label{eqn:matchweight2}
\end{eqnarray}
Consider the estimate for the hypothesis given by
\begin{equation}
\what \clH = \argmin_{\clH_\ell} D(\clH_\ell)\label{eqn:optmatch}
\end{equation}
This test can be interpreted as a minimum weight cardinality-$K$ matching \cite{west2001introduction} on the complete bipartite graph $\clG$ with appropriate weights assigned to the edges in $\clG$.
For $x_i \in \clS_1$ and $y_j \in \clS_2$ let the weight $w_{ij}$ of the edge between them be given by 
\begin{equation}
w_{ij} = D(\Gamma_{x_i} \| \half({\Gamma_{x_i} + \Gamma_{y_j}}) )+ D(\Gamma_{y_j} \| \half({\Gamma_{x_i} + \Gamma_{y_j}})).\label{eqn:weights}
\end{equation}
As the sequences $x_i$ and $y_j$ have equal lengths, the quantity $\half({\Gamma_{x_i} + \Gamma_{y_j}})$ appearing in \eqref{eqn:weights} can be interpreted as the empirical distribution of the concatenation of $x_i$ and $y_j$.
Thus weight $w_{ij}$ can be interpreted as the sum of two quantities -- the first quantity representing a measure of the difference between sequence $x_i$ and the concatenated sequence, and the second quantity representing a measure of the difference between $y_j$ and the concatenated sequence.
Effectively, $w_{ij}$ can be interpreted as a different distance measure between sequences $x_i$ and $y_j$.
Figure~\ref{fig:weigtedbip} shows the graph $\clG$ with weights added to the edges.


\begin{figure}[h]
\centering
\psfrag{s2}{$\clV_1 = \clS_1$}
\psfrag{o2}{$\clV_2 = \clS_2$}

\psfrag{p1}{$x_1$}
\psfrag{p2}{$x_2$}
\psfrag{p3}{$x_i$}
\psfrag{p4}{$x_M$}
\psfrag{vd}{\Huge $\vdots$}
\psfrag{y1}{$y_1$}
\psfrag{y2}{$y_2$}
\psfrag{y3}{$y_j$}
\psfrag{y4}{$y_N$}
\psfrag{wij}{$w_{ij}$}
\includegraphics[width=.4\columnwidth]{weightedbip.eps}
\caption{The weighted complete bipartite graph $\clG$ for problem~\Prb{unknown}. The weight of the edge between the $i$-th vertex in $\clV_1$ and the $j$-th vertex in $\clV_2$ is given by~\eqref{eqn:weights}. The matching corresponding to the hypothesis $\what \clH$ in~\eqref{eqn:optmatch} is given by the minimum weight matching on this graph with cardinality-$K$.} \label{fig:weigtedbip}
\end{figure}

We will now show that a test based on the estimate of $\widehat \clH$ in \eqref{eqn:optmatch} is asymptotically optimal.
For proving asymptotic optimality we restrict ourselves to tests that are based only on the empirical distributions of the observations.
Let $\Gamma_{XY}$ denote the collection of empirical distributions:
\[
\Gamma_{XY}:= \left(\Gamma_{x_1},\Gamma_{x_2},\ldots,\Gamma_{x_M},\Gamma_{y_1},\Gamma_{y_2},\ldots,\Gamma_{y_N}\right).
\]
The restriction to tests based on empirical distributions is justified in the asymptotic setting because of the following lemma.
\begin{lemma}\label{lem:typesuff}
Let $\Omega = (\Omega_1, \Omega_2,  \ldots, \Omega_J, \Omega_R)$ be a decision rule based only on the sequences $\{x_1,x_2,\ldots,x_M\}$ and $\{y_1,y_2,\ldots,y_N\}$. Then there exists a decision rule $\Lambda = (\Lambda_1, \Lambda_2, \ldots, \Lambda_J, \Lambda_R)$ based on the sufficient statistics $\Gamma_{XY}$ such that
\begin{eqnarray*}
\liminf_{n \to \infty} -\frac{1}{n} \log P_\Lambda(\err/\clH_\ell) &\geq&  \liminf_{n \to \infty} -\frac{1}{n} \log P_\Omega(\err/\clH_\ell),\\
\liminf_{n \to \infty} -\frac{1}{n} \log P_\Lambda(\rej/\clH_\ell) &\geq&  \liminf_{n \to \infty} -\frac{1}{n} \log P_\Omega(\rej/\clH_\ell). 
\end{eqnarray*}
for all $\ell \in [J]$ and all $\clU \subset \pz$
\qed
\end{lemma}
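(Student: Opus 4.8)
The plan is to symmetrize $\Omega$ over joint type classes by a plurality vote, exactly as in the proof of Lemma~\ref{lem:typesuff1}. For a tuple of $n$-types $\gamma = (\gamma_1,\ldots,\gamma_{M+N})$ on $\zstate$, write $T_\gamma \subset \mathbf{Z}$ for the associated joint type class, i.e.\ the set of all $(\underline x,\underline y)$ whose component empirical distributions $\Gamma_{x_1},\ldots,\Gamma_{x_M},\Gamma_{y_1},\ldots,\Gamma_{y_N}$ equal $\gamma$. The classes $\{T_\gamma\}$ partition $\mathbf{Z}$, and their number is at most $(n+1)^{|\zstate|(M+N)}$, which is polynomial in $n$. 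The crucial observation is that under any hypothesis $\clH_\ell$ and \emph{any} choice of source laws in $\clU$, the likelihood $\Prob_{\clH_\ell}(\underline x,\underline y)$ factors into a product of terms of the form $\prod_{z} \nu(z)^{n\Gamma_s(z)}$, one per component sequence $s$, and hence is constant on each $T_\gamma$. This single fact is what will deliver uniformity over $\ell$ and over $\clU$.

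Given this, I would define $\Lambda$ as follows: for each joint type $\gamma$ let $d(\gamma) \in \{1,\ldots,J,R\}$ be a label maximizing the count $|T_\gamma \cap \Omega_{d}|$ (ties broken arbitrarily), and set $\Lambda_k = \bigcup_{\gamma:\, d(\gamma)=k} T_\gamma$. By construction $\Lambda=(\Lambda_1,\ldots,\Lambda_J,\Lambda_R)$ is a partition that depends on $(\underline x,\underline y)$ only through $\Gamma_{XY}$, as required. Since there are $J+1$ labels, the plurality choice always captures at least the average share, $|T_\gamma \cap \Omega_{d(\gamma)}| \ge |T_\gamma|/(J+1)$. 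Because the measure is uniform on each $T_\gamma$, the same inequality holds with probabilities in place of counts: $\Prob_{\clH_\ell}(T_\gamma \cap \Omega_{d(\gamma)}) \ge \tfrac{1}{J+1}\Prob_{\clH_\ell}(T_\gamma)$ for every $\ell$ and every $\clU$.

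The exponent comparison then follows by a disjointness argument. For a fixed $\ell$, whenever $d(\gamma) \neq \ell$ the portion $T_\gamma \cap \Omega_{d(\gamma)}$ lies in the error region of $\Omega$ under $\clH_\ell$; summing the preceding inequality over all such $\gamma$, and using that the sets $T_\gamma\cap\Omega_{d(\gamma)}$ are disjoint and contained in $\bigcup_{k\neq\ell}\Omega_k$, gives
\[
P_\Lambda(\err/\clH_\ell) = \sum_{\gamma:\, d(\gamma)\neq\ell}\Prob_{\clH_\ell}(T_\gamma) \le (J+1)\,P_\Omega(\err/\clH_\ell).
\]
The identical argument with the label $R$ in place of $\ell$ yields $P_\Lambda(\rej/\clH_\ell) \le (J+1)\,P_\Omega(\rej/\clH_\ell)$. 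Applying $-\tfrac{1}{n}\log(\cdot)$ and $\liminf_{n\to\infty}$, the constant factor $J+1$ contributes $\tfrac{1}{n}\log(J+1)\to 0$, and the two claimed exponent inequalities drop out, uniformly in $\ell\in[J]$ and $\clU\subset\pz$. I expect the only delicate point to be the bookkeeping of this uniformity: one must check that the single, $\clU$-independent rule $d$ simultaneously preserves the exponent for every hypothesis and every admissible source configuration, which is precisely what the constancy of $\Prob_{\clH_\ell}$ on type classes guarantees.
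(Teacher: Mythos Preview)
Your construction—the plurality vote over joint type classes—is exactly the paper's, and the argument is sound apart from one bookkeeping slip: in the displayed error bound the index set should be $d(\gamma)\in[J]\setminus\{\ell\}$, not $d(\gamma)\neq\ell$. As written you are including $d(\gamma)=R$, so the left side actually equals $P_\Lambda(\err/\clH_\ell)+P_\Lambda(\rej/\clH_\ell)$, and on the right the pieces $T_\gamma\cap\Omega_R$ land in $\Omega_R$, not in the error region of $\Omega$. With the index set corrected the disjointness argument goes through and gives $P_\Lambda(\err/\clH_\ell)\le(J+1)\,P_\Omega(\err/\clH_\ell)$ as you intended (and your separate treatment of the label $R$ then handles rejection).

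Incidentally, your tail argument is cleaner than the paper's: the paper detours through the upper and lower type-probability estimates of Lemma~\ref{lem:typeprob} and picks up an extra subexponential factor $\tau_n\,2^{n\delta(n)}$ (with $\tau_n$ the number of joint types and $\delta(n)=(M+N)|\zstate|\log(n+1)/n$), whereas your direct use of the disjointness of the sets $T_\gamma\cap\Omega_{d(\gamma)}$ across $\gamma$ avoids that detour entirely and yields the sharper constant $(J+1)$.
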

We provide a proof in the appendix.
Note that $\Lambda_R$ and $\Omega_R$ are finite sets, thus their cardinality is well-defined.

In order to ensure exponential decay of the error probabilities at a prescribed exponential rate, we allow a no-match zone, i.e., we allow a decision in favor of rejecting all the $M$ hypotheses.
For this purpose, we need to identify the hypothesis corresponding to the second minimum weight matching in $\clG$.
Let
\begin{equation}
\wtilde \clH = \argmin_{\clH_\ell \neq \what \clH} D(\clH_\ell)\label{eqn:optmatch3}
\end{equation}
where $\what \clH$ is defined in~\eqref{eqn:optmatch}.
As in Section~\ref{sec:optknown}, the choices of $\what \clH$ and $\wtilde \clH$ have a simple interpretation.
based on maximum \emph{generalized likelihoods} as was the case in Lemma~\ref{lem:ML}.
\begin{lemma}\label{lem:GLRunknown}
The selections $\what \clH$ defined in \eqref{eqn:optmatch} and $\wtilde \clH$ defined in \eqref{eqn:optmatch3} can be expressed as
\begin{eqnarray}
\what \clH = \argmax_{\ell \in [J]} \max_{\substack{\clM, \clN \subset \pz\\|\clM \cap \clN| = K} } \Prob_{\clH_\ell} (x_1, \ldots, x_N,y_1, \ldots, y_N), \label{eqn:HhatML2}
\end{eqnarray}
\begin{eqnarray}
\wtilde \clH = \argmax_{\ell \in [J]: \clH_\ell \neq \what \clH} \max_{\substack{\clM, \clN \subset \pz\\|\clM \cap \clN| = K} } \Prob_{\clH_\ell} (x_1, \ldots, x_N,y_1, \ldots, y_N). \label{eqn:HtildeML2}
\end{eqnarray}
\qed
\end{lemma}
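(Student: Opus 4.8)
The plan is to evaluate, for each hypothesis $\clH_\ell$, the inner maximum $\max_{\clM,\clN}\Prob_{\clH_\ell}(\underline x,\underline y)$ in closed form, and then show that minimizing $D(\clH_\ell)$ is exactly the same as maximizing this quantity. First I would write out the generalized likelihood. Under $\clH_\ell$ the matching $\match_\ell$ prescribes that for every edge $(i,j)\in\match_\ell$ the sequences $x_i$ and $y_j$ are generated i.i.d.\ from one common distribution $\theta\in\pz$ (the shared source lying in $\clK$), while every unmatched $x_i$ and every unmatched $y_j$ is generated from its own free distribution. Since the data are i.i.d., the likelihood factorizes as a product over matched pairs and unmatched sequences; moreover the distinctness requirements on $\clM$ and $\clN$ carve out only a lower-dimensional set, so they do not lower the supremum, and it suffices to maximize each factor independently.

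Next I would carry out the per-factor maximization via the method of types. For an unmatched sequence $s$ of length $n$, $\max_{\mu}\mu(s)=\prod_z\Gamma_{s}(z)^{n\Gamma_{s}(z)}=2^{-nH(\Gamma_s)}$, the maximizer being the empirical distribution $\Gamma_s$. For a matched pair $(i,j)$, using that $x_i$ and $y_j$ have equal length $n$, the common-distribution likelihood is $\theta(x_i)\theta(y_j)=\prod_z\theta(z)^{n(\Gamma_{x_i}(z)+\Gamma_{y_j}(z))}$, which by the Gibbs inequality is maximized at $\theta=\half(\Gamma_{x_i}+\Gamma_{y_j})$ --- precisely the empirical distribution of the concatenation of $x_i$ and $y_j$ --- giving the value $2^{-2nH(\half(\Gamma_{x_i}+\Gamma_{y_j}))}$. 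Collecting these factors yields
\[
-\tfrac1n\log\Big(\max_{\clM,\clN}\Prob_{\clH_\ell}(\underline x,\underline y)\Big) = \sum_{(i,j)\in\match_\ell}2H\big(\half(\Gamma_{x_i}+\Gamma_{y_j})\big) + \sum_{i\text{ unm.}}H(\Gamma_{x_i}) + \sum_{j\text{ unm.}}H(\Gamma_{y_j}).
\]

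The key algebraic step is then the identity $2H(\half(\Gamma_{x_i}+\Gamma_{y_j})) = w_{ij} + H(\Gamma_{x_i}) + H(\Gamma_{y_j})$, obtained by expanding the two divergences in the weight \eqref{eqn:weights} and recombining the cross terms $\sum_z(\Gamma_{x_i}(z)+\Gamma_{y_j}(z))\log\half(\Gamma_{x_i}(z)+\Gamma_{y_j}(z))$ into $-2H(\half(\Gamma_{x_i}+\Gamma_{y_j}))$. Substituting this identity, and observing that every index $i\in[M]$ (resp.\ $j\in[N]$) appears exactly once across the matched and unmatched terms, the entropy contributions collapse to $\sum_{i=1}^{M}H(\Gamma_{x_i})+\sum_{j=1}^{N}H(\Gamma_{y_j})$, which does not depend on $\ell$. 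Hence $-\tfrac1n\log\max_{\clM,\clN}\Prob_{\clH_\ell}(\underline x,\underline y) = D(\clH_\ell) + \text{const}$, so $\argmin_{\ell}D(\clH_\ell)=\argmax_{\ell}\max_{\clM,\clN}\Prob_{\clH_\ell}$, which is \eqref{eqn:HhatML2}; restricting the competition to $\clH_\ell\neq\what\clH$ gives \eqref{eqn:HtildeML2}. I expect the only genuine obstacle to be bookkeeping rather than analysis: verifying that the matched and unmatched entropy terms recombine cleanly into an $\ell$-independent constant, and arguing carefully that the distinctness constraints on $\clM$ and $\clN$ may be dropped when passing to the supremum of the generalized likelihood.
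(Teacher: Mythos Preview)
Your proposal is correct and follows essentially the same route as the paper: maximize the per-factor likelihoods (unmatched sequences at their empirical types, matched pairs at the concatenated empirical $\half(\Gamma_{x_i}+\Gamma_{y_j})$), invoke the identity $2H(\half(\Gamma_{x_i}+\Gamma_{y_j}))=w_{ij}+H(\Gamma_{x_i})+H(\Gamma_{y_j})$, and observe that the residual entropy sum is $\ell$-independent. Your explicit remark that the distinctness constraints on $\clM,\clN$ do not lower the supremum is a nice clarification that the paper leaves implicit.
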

The above lemma is proved in the appendix.
As in Section~\ref{sec:optknown}, the optimal test with rejection can be stated in terms of $\what \clH$ and $\wtilde \clH$ as described in the following theorem.
\begin{theorem}\label{thm:opt}
Let $\Omega = (\Omega_1, \Omega_2,  \ldots, \Omega_J, \Omega_R)$ be a decision rule based on the collection $\Gamma_{XY}$ of empirical distributions
such that 
\begin{eqnarray}
P_\Omega(\err/\clH_\ell) \leq 2^{-\lambda n}, \mbox{ for all }\ell \in [J] \label{eqn:errprobconstraint2}
\end{eqnarray}
and for all choices of distributions in $\clK$, $\clM \setminus \clK$, and $\clN \setminus \clK$.

Let $\wtilde \lambda = \lambda - \frac{(M+N)|\zstate| \log(n+1)}{n}$,
\begin{eqnarray*}
\Lambda_\ell =  \{(\underline x, \underline y): D(\wtilde \clH) \geq \wtilde \lambda, \what \clH = \clH_\ell\}, \ell\in [J],
\end{eqnarray*}
and
\[
\Lambda_R = \{(\underline x, \underline y): D(\wtilde \clH) < \wtilde \lambda\}.
\]
Then
\begin{eqnarray}
\liminf_{n \to \infty} -\frac{1}{n} \log P_\Lambda(\err/\clH_\ell) \geq  \lambda, \quad \ell \in [J], \forall \clU \subset \pz \label{eqn:thmclaim1}
\end{eqnarray}
and
\begin{eqnarray}
\Lambda_R \subset \Omega_R. \label{eqn:thmclaim2}
\end{eqnarray}
\qed
\end{theorem}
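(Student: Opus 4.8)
The plan is to follow Gutman's two-part template, exactly as in Theorem~\ref{thm:opt1}: first establish the error-exponent bound \eqref{eqn:thmclaim1}, and then the rejection-region optimality \eqref{eqn:thmclaim2}. Both parts turn on a single algebraic fact about the edge weight $w_{ij}$ of \eqref{eqn:weights}: for any $\nu \in \pz$ and any empirical distributions $P,Q$,
\[
D(P\|\nu) + D(Q\|\nu) = D(P \| \half(P+Q)) + D(Q \| \half(P+Q)) + 2D(\half(P+Q) \| \nu),
\]
so that $w(P,Q) := D(P \| \half(P+Q)) + D(Q \| \half(P+Q))$ equals $\min_\nu\bigl[D(P\|\nu)+D(Q\|\nu)\bigr]$, attained at $\nu=\half(P+Q)$. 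The upper bound $D(P\|\nu)+D(Q\|\nu)\ge w(P,Q)$ drives the error exponent, while equality at the centroid drives the rejection optimality. Throughout I use that $\Omega$ depends only on $\Gamma_{XY}$, so every cell $\Omega_k$ is a union of type classes.

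For \eqref{eqn:thmclaim1}, I would first note the event inclusion: on $\Lambda_k$ with $k\neq\ell$ we have $\what\clH=\clH_k\neq\clH_\ell$ and $D(\wtilde\clH)\ge\wtilde\lambda$, and since $\clH_\ell$ competes in the definition \eqref{eqn:optmatch3} of $\wtilde\clH$, we get $D(\clH_\ell)\ge D(\wtilde\clH)\ge\wtilde\lambda$; hence the error event under $\clH_\ell$ lies in $\{D(\clH_\ell)\ge\wtilde\lambda\}$. Next, under $\clH_\ell$ each matched pair $(i,j)\in\match_\ell$ consists of two independent strings drawn i.i.d.\ from a common law $\nu_{ij}$, and distinct edges use disjoint, independent strings. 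By the upper bound of Lemma~\ref{lem:typeprobnew} together with the inequality above, for any fixed types $\Prob\{\Gamma_{x_i}=P_i,\Gamma_{y_j}=Q_j\}\le 2^{-n(D(P_i\|\nu_{ij})+D(Q_j\|\nu_{ij}))}\le 2^{-n\,w(P_i,Q_j)}$, with no dependence on the unknown $\nu_{ij}$. Multiplying over the $K$ independent edges gives $2^{-nD(\clH_\ell)}$ for each joint type of the matched strings, so summing over the at most $(n+1)^{2K|\zstate|}$ joint types satisfying $D(\clH_\ell)\ge\wtilde\lambda$ yields $P_\Lambda(\err/\clH_\ell)\le (n+1)^{2K|\zstate|}2^{-n\wtilde\lambda}$. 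Since $\wtilde\lambda\to\lambda$ and $2K\le M+N$, the liminf is $\ge\lambda$ for every choice of source laws, giving \eqref{eqn:thmclaim1}. For a single edge this is exactly Lemma~\ref{lem:ldptwoemp}; the point is that additivity of the divergences over independent edges preserves the exponent of the sum.

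For \eqref{eqn:thmclaim2} I would argue by contradiction. Suppose $(\underline x,\underline y)\in\Lambda_R$ but $(\underline x,\underline y)\in\Omega_k$ for some $k$. Because $D(\wtilde\clH)<\wtilde\lambda$, the two smallest weights satisfy $D(\what\clH)\le D(\wtilde\clH)<\wtilde\lambda$, so among $\{\what\clH,\wtilde\clH\}$ there is a hypothesis $\clH_m\neq\clH_k$ with $D(\clH_m)<\wtilde\lambda$. Since $\Omega$ reads only $\Gamma_{XY}$, the entire type class $T$ of $(\underline x,\underline y)$ lies in $\Omega_k$, hence $P_\Omega(\err/\clH_m)\ge\Prob_{\clH_m}\{T\}$. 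I then choose adversarial laws admissible in \eqref{eqn:errprobconstraint2}: for each matched pair $(i,j)\in\match_m$ set the common law to the centroid $\half(\Gamma_{x_i}+\Gamma_{y_j})$, and give every unmatched string a law equal to its own empirical type. By the lower bound of Lemma~\ref{lem:typeprobnew}, each matched pair contributes a factor $2^{-n(w(\Gamma_{x_i},\Gamma_{y_j})+2|\zstate|\log(n+1)/n)}$ and each of the $M+N-2K$ unmatched strings a factor $2^{-n|\zstate|\log(n+1)/n}$, so that $\Prob_{\clH_m}\{T\}\ge 2^{-n(D(\clH_m)+(M+N)|\zstate|\log(n+1)/n)}$. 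Since $D(\clH_m)<\wtilde\lambda=\lambda-(M+N)|\zstate|\log(n+1)/n$, the exponent is strictly below $\lambda$, whence $P_\Omega(\err/\clH_m)>2^{-\lambda n}$, contradicting \eqref{eqn:errprobconstraint2}. This proves $\Lambda_R\subset\Omega_R$, and shows that the offset $(M+N)|\zstate|\log(n+1)/n$ in $\wtilde\lambda$ is exactly what is needed to absorb the $M+N$ type-counting penalties of Lemma~\ref{lem:typeprobnew}.

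The step I expect to be most delicate is the lower bound $\Prob_{\clH_m}\{T\}\ge 2^{-n(\cdots)}$, because it requires the adversarial laws to be \emph{admissible} — distinct, with $|\clM\cap\clN|=K$, and lying in the correct pieces $\clK$, $\clM\setminus\clK$, $\clN\setminus\clK$ — whereas the natural minimizers $\half(\Gamma_{x_i}+\Gamma_{y_j})$ and the raw empirical types need not satisfy these constraints, and may even lack full support. I would handle this by a perturbation/continuity argument: replace each chosen law by a nearby fully supported distinct law consistent with the required intersection structure, and let the perturbation vanish with $n$, using continuity of the divergences (hence of the exponent) in the laws; the strict inequality $D(\clH_m)<\wtilde\lambda$ leaves room to absorb a vanishing perturbation. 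Apart from this technicality, the argument is a direct transcription of the known-source proof, with each known $\mu_i$ replaced by the centroid $\half(\Gamma_{x_i}+\Gamma_{y_j})$ and the offset scaled from $N$ to $M+N$.
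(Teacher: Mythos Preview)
Your proposal is correct and follows essentially the same Gutman-style two-part template as the paper's proof: the error-exponent bound via the inclusion $\{\err\}\subset\{D(\clH_\ell)\ge\wtilde\lambda\}$ and a type-class count, and the rejection inclusion via the lower bound of Lemma~\ref{lem:typeprobnew} with the adversarial laws set to the centroids $\half(\Gamma_{x_i}+\Gamma_{y_j})$ on matched edges and to the empirical types on unmatched strings. The only cosmetic differences are that the paper executes Part~1 by summing over \emph{sequences} using the entropy bound $\nu(s)\le 2^{-nH(\Gamma_s)}$ on concatenations together with Lemma~\ref{lem:sumexpentropy}, rather than summing over types as you do, and that it phrases Part~2 as a direct set-inclusion $\cup_{j\neq\ell}\Omega_j\subset\wtilde\Lambda_\ell$ rather than as a contradiction; the paper also does not address the admissibility/distinctness issue you flag in your last paragraph.
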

%
%
%
We provide a proof to the theorem in the appendix.
From the definition of $D(\clH_\ell)$ in \eqref{eqn:matchweight2} it is clear that the decision regions $\Lambda_\ell$'s and $\Lambda_R$ proposed in Theorem \ref{thm:opt} depends on the sequences in $\underline x$ and $\underline y$ only through $\Gamma_{XY}$.
From the condition of \eqref{eqn:thmclaim2} it is obvious that the probability of rejection of the decision rules under the decision rule $\Lambda$ is lower than the probability of rejection under the decision rule $\Omega$.
Thus the optimality result implies that the test $\Lambda$ has lower rejection probabilities, as defined in \eqref{eqn:rejprob2}, than any test $\Omega$ that satisfies an exponential decay of error probabilities as in \eqref{eqn:errprobconstraint2}.
For $N=1$, this result is similar to that given by Gutman in \cite[Thm 2]{gut89}.
However, the rejection condition in this solution is different from that provided by Gutman and can be interpreted as the condition under which the second lowest weight matching has a weight below a threshold.

The test can be explained in words as follows.
First identify the hypothesis corresponding to the minimum weight matching and the second minimum weight matching of cardinality $K$ in $\clG$.
Accept the former hypothesis if the weight corresponding to the latter exceeds the threshold $\wtilde \lambda$, and reject all hypotheses if the threshold is not exceeded.
In the case of $M=N=K$, it follows via Lemma~\ref{lem:GLRunknown} that the optimal choice of the hypothesis is given by the maximum generalized likelihood hypothesis, and that a no-match decision is selected when the second highest generalized likelihood exceeds a threshold.
In other words, this test leads to a rejection if the observations can be  well-explained by two or more hypotheses.

We note that the threshold $\wtilde \lambda$ appearing in the definition of $\Lambda_R$  satisfies $\wtilde \lambda \to \lambda$ as $n \to \infty$.
Using a generalization of Lemma~\ref{lem:ldptwoemp} we show in the proof that the choice of decision regions ensures that the error-exponent constraint of \eqref{eqn:thmclaim1} is satisfied.
We also observe that the offset between $\wtilde \lambda$ and $\lambda$ is just $(M+N)$ times the offset in the exponent appearing in the first inequality of \eqref{eqn:typeineq} which bounds the probability of observing a type.
This offset is introduced to ensure that the condition \eqref{eqn:thmclaim2} is satisfied.
The detailed arguments are in the proof.

%


\subsection{Comparison with the unconstrained problem}\label{sec:compunknown}
As in Section~\ref{sec:optknown}, it is interesting to compare the solution of Theorem \ref{thm:opt} with the solution that one would have to use without the prior knowledge that the strings in $\clS_2$ (also $\clS_1$) are produced by distinct sources.
We follow the same steps as in Section~\ref{sec:compknown}.
Assume $M=N=K$.
In the absence of prior knowledge, a reasonable strategy is to try to sequentially match each string in $\clS_2$ to some string in $\clS_1$.
For each $i \in [N]$ define
\begin{equation}
\widehat \sigma(i) = \argmin_{j\in [M]} D(\Gamma_{x_{j}} \| \half({\Gamma_{x_{j}} + \Gamma_{y_i}}) ) +D(\Gamma_{y_i} \| \half({\Gamma_{x_{j}} + \Gamma_{y_i}})). \label{eqn:ucsigmahat2}
\end{equation}
The function $\widehat \sigma$ maps every string in $\clS_2$ to some string in $\clS_1$.
If $\widehat \sigma$ is a one-to-one function, then it corresponds to a valid hypothesis for the matching problem.
We call this hypothesis $\clH^{\widehat \sigma}$.
If ${\widehat \sigma}$ is not a one-to-one function, or if $\clH^{\widehat \sigma}$ does not correspond to the true hypothesis, then the strings are not correctly matched and hence in this case an error occurs.
Furthermore, in order to satisfy the error exponent constraint, one is forced to reject whenever the individual hypothesis test on any of the $N$ strings in $\clS_2$ leads to a rejection.
For $i \in [N]$, let
\[
\widetilde w_i = \min_{j\in [M]\setminus{{\widehat \sigma}(i)}} D(\Gamma_{x_{j}} \| \half({\Gamma_{x_{j}} + \Gamma_{y_i}}) ) +D(\Gamma_{y_i} \| \half({\Gamma_{x_{j}} + \Gamma_{y_i}})).
\]
The solution to the overall problem is now given by
\begin{eqnarray}
\Lambda_\ell^{\uc} =  \{(\underline x, \underline y): \min_{i \in [N]} \widetilde w_i \geq \widecheck \lambda, \clH^{\widehat \sigma}= \clH_\ell\}, \ell \in [J]\label{eqn:decreguc2}
\end{eqnarray}
where the superscript of $\uc$ indicates that the solution is unconstrained and
\begin{eqnarray}
\Lambda_R^{\uc} = \{(\underline x, \underline y): \min_{i \in [N]} \widetilde w_i < \widecheck \lambda\} \label{eqn:rejuc2}
\end{eqnarray}
where $\widecheck \lambda = \lambda - \frac{(N+1)|\zstate| \log(n+1)}{n}$, is the optimal choice for the threshold obtained from Theorem \ref{thm:opt} when the second set of strings $\clS_2$ is a singleton.
The probability of error of this solution is given by
\begin{eqnarray*}
\lefteqn{P_{\Lambda^{\uc}}(\err|\clH_\ell)} \\ &=& \Prob_{\clH_\ell}\{y_i \mbox{ is incorrectly matched for some }i \in [N]\}\\
&\leq& \sum_{i=1}^N \Prob_{\clH_\ell}\{y_i \mbox{ is incorrectly matched}\}
\end{eqnarray*}
where the inequality follows via the union bound.
By the result of Theorem \ref{thm:opt}, each term in the above summation decays exponentially in $n$ with exponent $\lambda$ and thus we have
\[
\liminf_{n \to \infty} -\frac{1}{n} \log P_{\Lambda^{\uc}}(\err|\clH_\ell) \geq \lambda.
\]
Thus this solution meets the same error exponent constraint as the optimal solution of Theorem \ref{thm:opt} that one can use when the constraint on the strings is known a priori.
However, the rejection regions for the optimal test is a strict subset of the rejection region of \eqref{eqn:rejuc2}, as is evident from the conclusion of Theorem \ref{thm:opt}.
These regions can be visualized using the graph $\clG$ of Figure \ref{fig:weigtedbip}.
The rejection region $\Lambda_R$ of the optimal test of Theorem \ref{thm:opt} corresponds to all sequences $(\underline x, \underline y)$ such that there are two cardinality $K$ matchings on $\clG$ with weight less than or equal to $\widetilde \lambda$ where the weight of a matching is given by \eqref{eqn:matchweight}.
However, the rejection region $\Lambda_R^{\uc}$ of the unconstrained solution is the set of all $(\underline x, \underline y)$  such that some node on the right hand partition has two edges with weight less than or equal to $\widecheck \lambda$.
For large $n$, we have $\widetilde \lambda \approx \widecheck \lambda \approx \lambda$, and thus the sizes of the rejection regions can be significantly different.
As in Section~\ref{sec:compknownrej}, it is possible to use Sanov's theorem to quantify the significance by comparing the rejection exponents of the two tests.
Similarly it is possible to compare error exponents obtained by tests that do not allow rejection.
However, the analysis is much more involved as in the problem~\Prb{unknown} we now have two strings from each source, and hence analytical expressions for the rejection exponents are difficult to obtain.
We therefore avoid the details in this paper.



\section{Practical aspects, extensions and conclusions}\label{sec:conc}
We proposed asymptotically optimal solutions to two hypothesis testing problems that seek to match unlabeled sequences of observations to labeled source distributions or training sequences.
Under the constraint that the observed sequences are drawn from distinct sources, the structure of the optimal solution is significantly different from the unconstrained solution, and can lead to significant improvement in performance as we saw in Sections~\ref{sec:compknown} and~\ref{sec:compunknown}.

An important practical aspect is that of the complexity of the algorithms required to identify the optimal solutions of Theorem~\ref{thm:opt1} and~\ref{thm:opt}.
The unconstrained solutions of \eqref{eqn:ucsigmahat} and \eqref{eqn:ucsigmahat2} are straightforward to identify because these can be obtained by sequentially matching each string in $\clS$ or $\clS_2$ to one of the $M$ sources in $\clM$ or one of the $M$ strings in $\clS_1$.
This leads to a time-complexity of $O(MN)$.
The optimal (constrained) solutions are in general more complex to identify as a combinatorial optimization problem has to be solved to identify $\what H$ and $\wtilde H$ defined in \eqref{eqn:optmatch1}, \eqref{eqn:optmatch}, \eqref{eqn:optmatch31} and \eqref{eqn:optmatch3}.
Nevertheless, these solutions can be identified by solving minimum weight bipartite matching problems on the graphs $\clG$ which can be executed efficiently in polynomial-time.

The first step in implementing the solutions to problems~\Prb{known} and~\Prb{unknown} is to identify the estimates of \eqref{eqn:optmatch1} and \eqref{eqn:optmatch}.
As discussed earlier, the task of identifying these estimates is equivalent to solving a minimum weight cardinality-$K$ matching problem on a weighted complete bipartite graph.
If $M = N = K$, then this problem can be solved using the Hungarian algorithm \cite{kuh55}, which has a time-complexity of $O(N^3)$ (see \cite{ramtar12} and references therein).
When $M \neq N$, the Hungarian algorithm can be adapted to run in $O(MNK)$ as detailed in \cite{ramtar12b}.
Thus the complexity of this algorithm is roughly $K$ times more than that of the naive unconstrained algorithm.
This problem can also be solved using a polynomial time algorithm based on the theory of matroids (see, e.g., \cite[Ch. 8]{coocunpulsch11}).
In practice the complexity can often be reduced significantly.
For instance, in the solution of \eqref{eqn:optmatch1}, if some empirical distribution $\Gamma_{y_j}$ is not absolutely continuous with respect to some $\mu_i$, then the edge connecting the corresponding vertices in $\clG$ can be removed, as it will never be selected in the minimum weight matching.
The same step can be performed if the empirical distributions $\Gamma_{x_i}$ and $\Gamma_{y_j}$ have disjoint supports in the solution of \eqref{eqn:optmatch}.
If the number of remaining edges in the graph is $E$, then, when $M=N=K$, the Hungarian algorithm can be adapted to run with a complexity of $O(EN + N^2 \log N)$ \cite{fretar87} and when $M \neq N$, with a complexity of $O(EK +
K^2 \log (\min\{M,N\}))$ \cite{ramtar12b}.
Once the matchings $\what \clH$ of \eqref{eqn:optmatch1} and \eqref{eqn:optmatch} are identified, the matchings corresponding to \eqref{eqn:optmatch31} and \eqref{eqn:optmatch3} can also be identified in polynomial time.
A naive algorithm for this would be to sequentially repeat the same algorithms on the graphs obtained by removing edges appearing in $\what \clH$ one at a time from the graph $\clG$.
The minimum weight matching obtained in all repetitions would correspond to the minimum weight matching on the original graph $\clG$ that is not identical to $\what \clH$.
In many practical applications this step is unnecessary as rejecting all hypotheses is not acceptable.
In such cases one can use the estimates of \eqref{eqn:optmatch1} and \eqref{eqn:optmatch}.
A practical application of such a solution and experimental evaluation of the method is reported in \cite{unnnai13} where $M=N=K \approx 1500$ and for $M=N=K \approx 47000$  in \cite{naiunnthivet14}.

The proposed solution can be extended in many directions.
An important generalization is with respect to the requirement that all sequences have the same sample size.
In practice, this is often not the case.
For example in problem~\Prb{known}, it might be the case that each string $y_i$ has a length $n_i = \alpha_i n$ with $\alpha_i \geq 1$.
In such a case it might be possible to extend the result of Theorem~\ref{thm:opt1} by adapting the definitions of $D(\clH_\ell)$ in \eqref{eqn:matchweight} to
\[
D(\clH_\ell) = \sum_{(i,j) \in \match_\ell} \alpha_j D(\Gamma_{y_j} \| \mu_i)
\]
and redefining the threshold $\wtilde \lambda$ in the statement of the theorem to $\wtilde \lambda = \lambda - \frac{|\zstate| \sum_{i=1}^N\log (n_i + 1)}{n}$.
With this definition the optimality result of Theorem~\ref{thm:opt1} is expected to hold.
Moreover the maximum likelihood interpretation of Lemma~\ref{lem:ML} continues to hold for $M \geq N = K$.
Alternatively, if all $n_i \geq n$ and all $n_i$ are approximately equal, then the test proposed in Theorem~\ref{thm:opt1} can still be used, and the probability of error and probability of rejection are expected to be lower than those expected when $n_i = n$ for all $i$.

Similarly, the solution to problem~\Prb{unknown} can be generalized to the scenario in which the sequences have distinct lengths by following Gutman \cite{gut89}.
Let $n^x_i = \alpha^x_i n$ denote the length of sequence $x_i \in \clS_1$ and $n^y_j = \alpha^y_j n$ denote the length of sequence $y_j \in \clS_2$ with $\alpha^x_i \geq 1$ and $\alpha^y_j \geq 1$ for all $i,j$.
Then the definition of $D(\clH_\ell)$ in \eqref{eqn:matchweight2} can be changed to
\begin{eqnarray*}
D(\clH_\ell)&=& \sum_{(i,j) \in \match_\ell} \frac{n^x_i}{n}D\left(\Gamma_{x_{i}} \| \frac{n^x_i \Gamma_{x_{i}} + n^y_j\Gamma_{y_j}}{n^x_i+n^y_j}\right)  \\
&& \qquad+ \frac{n^y_j}{n}D\left(\Gamma_{y_j} \| \frac{n^x_i\Gamma_{x_{i}} + n^y_j\Gamma_{y_j}}{n^x_i+n^y_j}\right)
\end{eqnarray*}
and the threshold $\wtilde \lambda$ in the statement of Theorem~\ref{thm:opt1} can be changed to 
\[
\wtilde \lambda = \lambda - \frac{|\zstate| \left(\sum_{i=1}^M\log (n^x_i + 1) + \sum_{j=1}^N\log (n^y_j + 1)\right)}{n}.
\]
With this definition the optimality result of Theorem~\ref{thm:opt} is expected to hold.
It can be noted that $\frac{n^x_i \Gamma_{x_{i}} + n^y_j\Gamma_{y_j}}{n^x_i+n^y_j} = \Gamma_{t_{ij}}$ where $t_{ij}$ is the concatenation of $x_i$ and $y_j$.

%
%

Throughout this paper we focused on source probability distributions supported on a finite alphabet $\zstate$.
It might be possible to extend some of these results to probability distributions on continuous alphabets.
For the problem with known sources studied in Section~\ref{sec:optknown}, we know from Lemma~\ref{lem:ML}, that the choices $\what H$ and $\wtilde H$ correspond to the maximum-likelihood hypothesis, and the second most likely hypothesis.
Hence, even for continuous alphabets, these hypotheses can be identified using standard techniques \cite{leh05}.
However, we recall that the optimal test of Theorem~\ref{thm:opt1} requires us to compare $D(\wtilde H)$ to a threshold.
For continuous distributions the empirical distributions are in general not absolutely continuous with respect to the true distributions and thus $D$ is always $\infty$.
Thus the definition of the decision regions for the optimal test have to be modified by replacing the Kullback Leibler divergence with some appropriately defined function of the log-likelihood function and by setting the thresholds intelligently.
A potential approach is to adapt the method proposed for binary hypothesis testing in \cite{han00} to multiple hypothesis.
The analysis of the error exponents and rejection exponents in such continuous alphabet problems are typically performed using Cramer's theorem  rather than Sanov's theorem.
The error exponent result of \eqref{eqn:errexpnorej} is expected to continue to hold for a test that always decides in favor of $\what H$ without rejection.


The results of Section~\ref{sec:optunknown} are more difficult to generalize to source distributions on continuous alphabets, because, in general, the empirical distributions of all sequences are expected to have mutually disjoint supports.
However, if the source distributions are constrained to lie in some parametric family, for example, an exponential family \cite{leh05} such as the class of Gaussian distributions of unknown means and variances equal to unity, it might be possible to identify optimal procedures via the maximum generalized likelihood interpretation of Lemma~\ref{lem:GLRunknown}.
This idea of restricting to finite dimensional parametric families is similar to the dimensionality reduction approach prescribed in \cite{unnhuameysurvee11} for universal hypothesis testing.
Theses ideas are also useful in applications in which the alphabet size $|\zstate|$ is large.
As described in \cite{unnhuameysurvee11}, test statistics for hypothesis testing problems on large alphabets suffer from large variance for moderate sequence lengths, and thus lead to poor error probability performances.
Dimensionality reduction techniques like those proposed in \cite{unnhuameysurvee11} are an effective technique to address these concerns.


The solutions proposed in this paper for i.i.d. sources can also be easily extended to finite memory Markov sources on finite alphabets following the approach in \cite{gut89}.
Furthermore, it is possible to study the weak-convergence behavior of the test statistics of \eqref{eqn:optmatch1} and \eqref{eqn:optmatch} following the method outlined in \cite{unnhua13}.
Using such an analysis it is possible to estimate the error probabilities for finite sample sizes.

\section*{Acknowledgments}
The author thanks the anonymous reviewers for several helpful suggestions and Farid Movahedi Naini for helpful discussions.
This research was supported by ERC Advanced Investigators Grant: Sparse Sampling: Theory, Algorithms and Applications SPARSAM no 247006.


\appendix
For proving the various results we need some new notation and a few lemmas.
For any sequence $s \in \zstate^n$ we use $T_s$ to denote the \emph{type class} of $s$, i.e., the set of all sequences of length $n$ with the same empirical distribution as $s$.
The following lemmas are well known.
For proofs see \cite{covtho06}.
The first lemma below is just a restatement of Lemma~\ref{lem:typeprobnew}.
\begin{lemma}\label{lem:typeprob}
For every $p \in \pz$ and every $s \in \zstate^n$,
\[
\frac{1}{(n+1)^{|\zstate|}} 2^{-n D(\Gamma_s \| p)} \leq \Prob_p(T_s) \leq 2^{-n D(\Gamma_s \| p)}
\]
where $\Prob_p$ denotes the probability measure when all observations in $s$ are drawn i.i.d. according to law $p$. \qed
\end{lemma}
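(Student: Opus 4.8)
The plan is to note first that this is essentially a notational restatement of Lemma~\ref{lem:typeprobnew}. Indeed, the type class $T_s$ is by definition the set of all length-$n$ strings sharing the empirical distribution $\Gamma_s$, so for a random string $y$ drawn i.i.d.\ under $p$ we have $\Prob_p(T_s) = \Prob\{\Gamma_y = \Gamma_s\}$. Moreover the prefactor in the lower bound of \eqref{eqn:typeineq} satisfies $2^{-n\cdot \frac{|\zstate|\log(n+1)}{n}} = (n+1)^{-|\zstate|}$, so \eqref{eqn:typeineq}, with $\ystate$ and $\nu$ renamed to $\zstate$ and $p$, is literally the two-sided bound claimed here. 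The one-line proof is therefore to invoke Lemma~\ref{lem:typeprobnew}.

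For completeness I would also record the self-contained method-of-types derivation. First I would compute the probability of a single string: every $y \in T_s$ has type $Q := \Gamma_s$, and i.i.d.\ sampling gives $\Prob_p(y) = \prod_{z\in\zstate} p(z)^{nQ(z)} = 2^{n\sum_z Q(z)\log p(z)} = 2^{-n(D(Q\|p)+H(Q))}$, a quantity that depends only on the type $Q$ and not on the particular string. Summing this common value over the $|T_s|$ members of the class yields $\Prob_p(T_s) = |T_s|\,2^{-n(D(Q\|p)+H(Q))}$.

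The remaining ingredient is the cardinality estimate $\frac{1}{(n+1)^{|\zstate|}}\,2^{nH(Q)} \le |T_s| \le 2^{nH(Q)}$. The upper bound is immediate: applying the displayed identity with $p=Q$ gives $1 \ge \Prob_Q(T_s) = |T_s|\,2^{-nH(Q)}$ since $D(Q\|Q)=0$. Substituting the two cardinality bounds into $\Prob_p(T_s) = |T_s|\,2^{-n(D(Q\|p)+H(Q))}$ cancels the $2^{nH(Q)}$ factors and produces exactly the asserted inequalities.

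The only step requiring real work is the lower bound on $|T_s|$: here I would use the standard fact that $T_s$ is the most probable type class under $p=Q$, proved by a multinomial-coefficient ratio comparison $\Prob_Q(T_{Q'}) \le \Prob_Q(T_Q)$ over competing types $Q'$, combined with the observation that there are at most $(n+1)^{|\zstate|}$ distinct types partitioning $\zstate^n$, so the largest-probability class carries at least $(n+1)^{-|\zstate|}$ of the mass. This monotonicity-of-the-central-type argument is the main obstacle; everything else is direct algebra. Given that Lemma~\ref{lem:typeprobnew} is already available, the honest shortest route is simply to cite it, presenting the method-of-types derivation above only as supporting intuition.
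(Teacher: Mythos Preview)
Your proposal is correct and matches the paper's own treatment: the paper does not prove this lemma but simply remarks that it is a restatement of Lemma~\ref{lem:typeprobnew} and refers to \cite{covtho06} for the standard method-of-types argument. Your additional self-contained derivation (the single-string probability computation, the $|T_s|$ cardinality bounds, and the most-probable-type-class argument) is exactly the textbook proof the paper is citing, so there is nothing to add or correct.
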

\begin{lemma}\label{lem:probandentropy}
For any sequence $s \in \zstate^n$  and any $\nu \in \pz$ we have
\[
\nu(s) \leq 2^{-n H(\Gamma_s)}.
\]
\qed
\end{lemma}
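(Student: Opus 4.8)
The plan is to prove the exact identity
\[
\nu(s) = 2^{-n\left(H(\Gamma_s) + D(\Gamma_s \| \nu)\right)}
\]
and then read off the claimed inequality from the non-negativity of the Kullback--Leibler divergence. First I would use the i.i.d.\ structure to factor the probability over the symbols of the alphabet. Since each coordinate of $s$ is drawn independently according to $\nu$, we have $\nu(s) = \prod_{i=1}^n \nu(s_i)$, and grouping identical symbols gives $\nu(s) = \prod_{z \in \zstate} \nu(z)^{n \Gamma_s(z)}$, because the symbol $z$ occurs exactly $n\Gamma_s(z)$ times in $s$ by the definition of the empirical distribution $\Gamma_s$.

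Next I would rewrite the exponent by inserting $\Gamma_s(z)$ as a multiplicative factor, writing $\nu(z) = \Gamma_s(z)\cdot \tfrac{\nu(z)}{\Gamma_s(z)}$ for every $z$ in the support of $\Gamma_s$. This splits the product into two pieces: $\prod_z \Gamma_s(z)^{n\Gamma_s(z)}$ and $\prod_z \left(\tfrac{\nu(z)}{\Gamma_s(z)}\right)^{n\Gamma_s(z)}$. Taking base-$2$ logarithms and using the definitions of $H$ and $D$ from the notation section, the first factor equals $2^{n \sum_z \Gamma_s(z)\log \Gamma_s(z)} = 2^{-n H(\Gamma_s)}$, and the second equals $2^{-n D(\Gamma_s \| \nu)}$, which establishes the displayed identity. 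The inequality then follows immediately since $D(\Gamma_s \| \nu) \geq 0$, so the extra factor $2^{-n D(\Gamma_s \| \nu)}$ is at most $1$.

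There is no substantive obstacle here; the result is a standard computation in the method of types. The only points needing a word of care are degenerate terms in the products. If $\nu(z) = 0$ for some $z$ with $\Gamma_s(z) > 0$, then $\nu(s) = 0$ and the bound holds trivially (consistently, $D(\Gamma_s \| \nu) = \infty$ in this case). For symbols $z$ with $\Gamma_s(z) = 0$, the corresponding factors contribute nothing, under the usual convention $0 \log 0 = 0$, so they may be dropped from every product. I would restrict all products to the support of $\Gamma_s$ to keep the manipulation clean.
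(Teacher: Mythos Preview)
Your proof is correct and is exactly the standard method-of-types computation: write $\nu(s)=\prod_{z}\nu(z)^{n\Gamma_s(z)}=2^{-n(H(\Gamma_s)+D(\Gamma_s\|\nu))}$ and use $D\ge 0$. The paper does not supply its own proof of this lemma but simply refers the reader to Cover and Thomas, where precisely this argument appears, so your approach coincides with the intended one.
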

The following lemma is easy to see.
\begin{lemma}\label{lem:sumexpentropy}
For finite set $\zstate$, we have
\[
\sum_{s \in \zstate^n} 2^{-n(H(\Gamma_{s}))} \leq (n+1)^{|\zstate|}.
\]
\end{lemma}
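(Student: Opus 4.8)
The plan is to group the sum by type classes and reduce each summand to a probability, which is automatically bounded by one. First I would partition $\zstate^n$ according to empirical distributions. Since $H(\Gamma_s)$ depends on $s$ only through its type, writing $P$ for a generic type (an empirical distribution realizable by some length-$n$ string) and $T_P$ for the set of strings of type $P$, we have
\[
\sum_{s \in \zstate^n} 2^{-nH(\Gamma_s)} = \sum_{P} |T_P|\, 2^{-nH(P)},
\]
where the outer sum ranges over the finitely many types of length $n$ over $\zstate$.

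The key step is to show that each summand is at most $1$, i.e.\ that $|T_P|\,2^{-nH(P)} \le 1$. To see this I would compute, under i.i.d.\ sampling from $P$ itself, the probability of a single string $s$ of type $P$:
\[
P(s) = \prod_{z \in \zstate} P(z)^{nP(z)} = 2^{\sum_{z} nP(z)\log P(z)} = 2^{-nH(P)},
\]
using that the number of occurrences of $z$ in $s$ equals $nP(z)$. Because every string in $T_P$ shares the same type and hence the same probability under $P$, this gives $\Prob_P(T_P) = |T_P|\,2^{-nH(P)}$; being a probability, it is at most $1$. (Equivalently, this is the tight case $\nu = \Gamma_s$ of Lemma~\ref{lem:probandentropy}.)

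Finally I would count the types: a type of length $n$ is determined by its count vector $(n_z)_{z \in \zstate}$ with each $n_z \in \{0,1,\ldots,n\}$ and $\sum_{z} n_z = n$, so there are at most $(n+1)^{|\zstate|}$ of them. Combining the three steps,
\[
\sum_{s \in \zstate^n} 2^{-nH(\Gamma_s)} = \sum_{P} |T_P|\, 2^{-nH(P)} \le \sum_{P} 1 \le (n+1)^{|\zstate|},
\]
which is the claim. I do not expect a genuine obstacle here, consistent with the lemma being labeled ``easy to see''; the only points to get right are the identity $P(s) = 2^{-nH(P)}$ for strings $s$ of type $P$ and the polynomial bound on the number of types, both standard facts from the method of types.
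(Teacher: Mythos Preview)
Your proof is correct and follows essentially the same approach as the paper's: group the sum by type classes, bound each term $|T_P|\,2^{-nH(P)}$ by $1$, and then bound the number of types by $(n+1)^{|\zstate|}$. The only difference is that you spell out why $|T_P|\,2^{-nH(P)}\le 1$ (as the probability $\Prob_P(T_P)$), whereas the paper simply cites the standard type-counting facts from Cover--Thomas.
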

\begin{proof}
Let $\clP_n$ denote the set of all types with denominator $n$.
Let $T(P)$ be the set of sequences in $\zstate^n$ with type $P$.
We have
\begin{eqnarray}
\sum_{s \in \zstate^n} 2^{-n(H(\Gamma_{s}))} &=& \sum_{P \in \clP_n} |T(P)| 2^{-n H(P)}\\
&\stackrel{(a)}{\leq}& \sum_{P \in \clP_n} 1 = |\clP_n| \stackrel{(b)}{\leq} (n+1)^{|\zstate|}
\end{eqnarray}
where (a) and (b) follow from \cite[Ch. 11]{covtho06}.
\end{proof}
The following lemma is also required for some proofs.
\begin{lemma}\label{lem:KLofunion}
For $\mu_1, \mu_2, \ldots, \mu_N \in \clP(\zstate)$ let $\mu^{\mathsf{prod}} := \mu_1 \times \mu_2 \times \ldots \mu_N$ denote the product distribution.
For $\pi \in \clP(\zstate^N)$, let $\breve\pi$ denote the product distribution obtained from the marginals of $\pi$, i.e., $\breve\pi = \pi_{1,.}\times \pi_{2,.} \times \ldots \pi_{N,.}$, where $\pi_{k,.}$ denote the marginal distribution of $\pi$ with respect to the $k$-th component.
Then we have
\[
D(\breve\pi \| \mu^{\mathsf{prod}}) \leq D( \pi \| \mu^{\mathsf{prod}}).
\]
\end{lemma}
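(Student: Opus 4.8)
The plan is to establish a Pythagorean-style decomposition of the relative entropy and then invoke non-negativity of the Kullback--Leibler divergence. Write a generic point of $\zstate^N$ as $z = (z_1,\ldots,z_N)$, and recall that $\mu^{\mathsf{prod}}(z) = \prod_{k=1}^N \mu_k(z_k)$ and $\breve\pi(z) = \prod_{k=1}^N \pi_{k,.}(z_k)$ are both product distributions. Starting from the definition
\[
D(\pi \| \mu^{\mathsf{prod}}) = \sum_{z \in \zstate^N} \pi(z) \log \frac{\pi(z)}{\mu^{\mathsf{prod}}(z)},
\]
I would insert $\breve\pi(z)$ in numerator and denominator and split the logarithm as $\log \frac{\pi(z)}{\breve\pi(z)} + \log \frac{\breve\pi(z)}{\mu^{\mathsf{prod}}(z)}$, so that $D(\pi \| \mu^{\mathsf{prod}})$ becomes a sum of two terms.

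The first term is exactly $D(\pi \| \breve\pi) \geq 0$. For the second term, the key observation is that both $\breve\pi$ and $\mu^{\mathsf{prod}}$ are products, so the ratio factorizes across coordinates:
\[
\log \frac{\breve\pi(z)}{\mu^{\mathsf{prod}}(z)} = \sum_{k=1}^N \log \frac{\pi_{k,.}(z_k)}{\mu_k(z_k)}.
\]
Summing this against $\pi(z)$ and marginalizing each summand onto its own coordinate, using $\sum_{z} \pi(z)\, g(z_k) = \sum_{z_k} \pi_{k,.}(z_k)\, g(z_k)$, collapses the second term to $\sum_{k=1}^N D(\pi_{k,.} \| \mu_k)$. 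Computing $D(\breve\pi \| \mu^{\mathsf{prod}})$ directly by the same factorization---now marginalizing against the product $\breve\pi$ itself---yields the identical value $\sum_{k=1}^N D(\pi_{k,.} \| \mu_k)$. Hence the second term equals $D(\breve\pi \| \mu^{\mathsf{prod}})$, and we obtain
\[
D(\pi \| \mu^{\mathsf{prod}}) = D(\pi \| \breve\pi) + D(\breve\pi \| \mu^{\mathsf{prod}}).
\]
The claim follows immediately from $D(\pi \| \breve\pi) \geq 0$.

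The computation is elementary, and there is no real obstacle; the only point requiring care is the bookkeeping in the cross term. The essential fact is that because $\mu^{\mathsf{prod}}$ is a product distribution, only the coordinate marginals $\pi_{k,.}$ of $\pi$ survive after summation, so $\pi$ may be replaced by $\breve\pi$ in that term without changing its value---this is precisely what makes the cross term coincide with $D(\breve\pi \| \mu^{\mathsf{prod}})$. I would also record the standard conventions ($0\log 0 = 0$, and both sides being simultaneously $+\infty$ when some $\pi_{k,.}$ fails to be absolutely continuous with respect to $\mu_k$) so that the decomposition and the resulting inequality remain valid in the extended reals.
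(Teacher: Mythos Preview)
Your proof is correct and is essentially the same argument as the paper's: both split $D(\pi\|\mu^{\mathsf{prod}})$ by inserting $\breve\pi$ to obtain $D(\pi\|\breve\pi)+\sum_k D(\pi_{k,.}\|\mu_k)$ and then identify the second sum with $D(\breve\pi\|\mu^{\mathsf{prod}})$. The only cosmetic difference is that the paper rewrites the first term as $\sum_i H(X_i)-H(X_1,\ldots,X_N)$ and invokes subadditivity of entropy, which is exactly the statement $D(\pi\|\breve\pi)\geq 0$ that you use.
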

\begin{proof}
We know that
\[
D( \pi \| \mu^{\mathsf{prod}}) = E_{X_1,X_2,\ldots,X_N} \frac{\log \pi(X_1,X_2,\ldots,X_N)}{\prod_{i \in [N]}\mu_i(X_i)}
\]
where $(X_1,X_2,\ldots,X_N)$ has joint distribution $\pi$.
Simplifying we have
\begin{eqnarray*}
\lefteqn{D( \pi \| \mu^{\mathsf{prod}})}\\
 &=& E_{X_1,X_2,\ldots,X_N} \log \left(\frac{\pi(X_1,X_2,\ldots,X_N)}{\prod_{i \in [N]}\pi_{i,.}(X_i)}\right) \\
 &&\qquad +\log \left(\frac{\prod_{i \in [N]}\pi_{i,.}(X_i)}{\prod_{i \in [N]}\mu_i(X_i)}\right)\\
&=& E_{X_1,X_2,\ldots,X_N} \log \left(\frac{\pi(X_1,X_2,\ldots,X_N)}{\prod_{i \in [N]}\pi_{i,.}(X_i)}\right) \\
&&\qquad+\sum_{i \in [N]} D(\pi_{i,.} \| \mu_i)\\
&=& \sum_{i\in[N]} H(X_i) - H(X_1,X_2,\ldots,X_N) + \sum_{i \in [N]} D(\pi_{i,.} \| \mu_i)\\
&\geq& \sum_{i \in [N]} D(\pi_{i,.} \| \mu_i)
= D(\breve\pi \| \mu^{\mathsf{prod}}).
\end{eqnarray*}
where $H(.)$ denotes Shannon entropy and the inequality follows a well known information theoretic inequality between the joint Shannon entropy of random variable and the sum of their individual Shannon entropies \cite{covtho06}.
\end{proof}

\subsection{Proof of Lemma~\ref{lem:ldptwoemp}}
Let $A = \{(x,y): x \in \ystate^n, y \in \ystate^n, \mbox{ and }D(\Gamma_{x} \| \half(\Gamma_{x}  + \Gamma_{y} ))  + D(\Gamma_{y} \| \half(\Gamma_{x}  + \Gamma_{y} )) > \lambda\}$.
Then we have
\begin{eqnarray*}
\lefteqn{\Prob \{ ({y_1}, {y_2}) \in A\}}\\ &=& \sum_{(x,y) \in A} \nu(x) \nu(y)\\
&\stackrel{(a)}{\leq}& \sum_{(x,y) \in A} 2^{-2n H(\half(\Gamma_x + \Gamma_y))}\\
&=& \sum_{(x,y) \in A} 2^{-n (H(\Gamma_x) + H(\Gamma_y)  )}\\
&&\qquad 2^{-n (D(\Gamma_x \| \half(\Gamma_x + \Gamma_y))  + D(\Gamma_y \| \half(\Gamma_x + \Gamma_y)) )}\\
&\stackrel{(b)}{\leq}& \sum_{(x,y) \in A} 2^{-n (H(\Gamma_x) + H(\Gamma_y) + \lambda )}\\
&\leq& 2^{-n  \lambda} \sum_{x \in \ystate} 2^{-n H(\Gamma_x) } \sum_{y \in \ystate} 2^{-n H(\Gamma_y)}\\
&\stackrel{(c)}{\leq}& 2^{-n  \lambda}(n+1)^{2|\ystate|}
\end{eqnarray*}
where (a) follows from Lemma~\ref{lem:probandentropy} applied to a concatenation of $x$ and $y$, (b) from the definition of $A$, and (c) from Lemma~\ref{lem:sumexpentropy}.
Thus
\begin{eqnarray*}
\lim_{n \to \infty} - \frac{1}{n} \log \Prob \{ ({y_1}, {y_2}) \in A\} &\geq& \lambda.
\end{eqnarray*}

\subsection{Proof of Lemma \ref{lem:typesuff1}}
Consider an arbitrary tuplet of sequences $(y_1,y_2,\ldots,y_N)$.
Let $T=(T_{y_1},\ldots,T_{y_N})$ denote the joint type-class of all the sequences, i.e., it is the set of all tuplets of sequences with the same joint type as $(y_1,y_2,\ldots,y_N)$:
\[
T = \{(z_1,\ldots,z_N):  z_i \subset \zstate^n \mbox{ and } \Gamma_{z_i} = \Gamma_{y_i} \mbox{ for all } i \in [N]\}.
\]
Any $(y_1',y_2',\ldots,y_N')\in T$ belongs to exactly one of the sets $\Omega_1,\Omega_2,\ldots,\Omega_J,\Omega_R$.
We modify the decision rule $\Omega$ as follows.
For any joint type $T$ we let $\Lambda_\ell$ include $T$ if $\Omega_\ell$ contains the most number of the sequences of $T$, for $\ell \in \{1,2,\ldots,J, R\}$.
In case of ties we break them arbitrarily and include $T$ in exactly one of the $\Lambda_\ell$'s.


Let $q^y_i$ denote the probability distribution of the source that produced sequence $y_i$ under hypothesis $\clH_\ell$.
For any hypothesis $\clH_\ell$ with $\ell \in [J]$ and any joint type $T \subset \Lambda_k$ with $k \in [J] \cup \{R\}$ we have by Lemma \ref{lem:typeprob} and definition of $\Lambda_\ell$:
\begin{eqnarray}
\Prob_{\clH_\ell} \{\Omega_k\} &\geq& \Prob_{\clH_\ell} \{\Omega_k \cap T\} \geq \frac{1}{J+1}\Prob_{\clH_\ell} \{T\} \nonumber\\
&\stackrel{(a)}{\geq}& \frac{2^{-n \left(\delta(n)+ \sum_{j=1}^N D(\Gamma_{y_j} \| q_j^y)\right)}}{J+1} \label{eqn:prooftemp}
\end{eqnarray}
where (a) follows via the first inequality in the statement of Lemma \ref{lem:typeprob} with $\delta(n) = \frac{N |\zstate| \log(n+1)}{n}$.
Combining the above result along with the definition of $\Lambda_\ell$ and Lemma \ref{lem:typeprob}, we have
\begin{eqnarray*}
\Prob_{\clH_\ell} \{\Lambda_k\} &=& \sum_{T: T \subset \Lambda_k}\Prob_{\clH_\ell} \{T\}\\
&\stackrel{(a)}{\leq}& \sum_{T: T \subset \Lambda_k} 2^{-n \left(\sum_{j=1}^N D(\Gamma_{y_j} \| q_j^y)\right)}\\
&\stackrel{(b)}{\leq}& \sum_{T: T \subset \Lambda_k} 2^{n \delta(n)} (J+1) \Prob_{\clH_\ell} \{\Omega_k\}\\
&\leq& \tau_n 2^{ n \delta(n)}(J+1)\Prob_{\clH_\ell} \{\Omega_k\}
\end{eqnarray*}
where (a) follows via the second inequality in Lemma \ref{lem:typeprob} and (b) via \eqref{eqn:prooftemp}.
The quantity $\tau_n$ represents the number of joint types of length $n$. 
Since $\frac{\log \tau_n}{n} \to 0$ \cite{covtho06} and $\delta(n) \to 0$ we obtain the inequality relations claimed in the lemma by choosing $k \in [J]$ and $k = R$.

\subsection{Proof of Lemma~\ref{lem:ML}}

Under $\clH_\ell$ let
\[
I_y^\ell = \{j:\mbox{ No edge in }\match_\ell\mbox{ is incident on }y_j\}.
\]
Also let $q^y_i$ denote the probability distribution of the source that produced sequence $y_i$.
We have
\begin{eqnarray*}
\lefteqn{\argmax_{\ell \in [J]} \max_{\substack{\clN \subset \pz\\\clM \cap \clN = \clK} }\Prob_{\clH_\ell} (y_1, y_2, \ldots, y_N)} \\
&=& \argmax_{\ell \in [J]} \sum_{(i,j) \in \match_\ell} \log {\prod_{k=1}^n\mu_i(y_j(k))} \\
&&\qquad+ \sum_{j \in I_y^\ell} \max_{q^y_j \in \pz}\log {\prod_{k=1}^n q^y_j(y_j(k))}\\
&\stackrel{(a)}{=}&  \argmax_{\ell \in [J]} \sum_{(i,j) \in \match_\ell} \left(\sum_{z \in \zstate} n \Gamma_{y_j}(z) \log(\mu_i(z)) \right)  \\
&&\qquad + \sum_{j \in I_y^\ell} \log {\prod_{k=1}^n \Gamma_{y_j} (y_j(k))}\\
&=& \argmax_{\ell \in [J]} \sum_{(i,j) \in \match_\ell} \left(-H(\Gamma_{y_j})-D(\Gamma_{y_j} \| \mu_i)\right) \\
&&\qquad- \sum_{j \in I_y^\ell} H(\Gamma_{y_j})\\
&=& \argmin_{\ell \in [J]} \sum_{(i,j) \in \match_\ell} D(\Gamma_{y_j} \| \mu_i)+ \sum_{j \in [N]} H(\Gamma_{y_j}) \\
&=&\argmin_{\ell \in [J]} D(\clH_\ell)
\end{eqnarray*}
where (a) follows from the fact that the likelihood of a string is maximized by the empirical distribution.

\subsection{Proof of Theorem \ref{thm:opt1}}
As before, let
\[
I_y^\ell = \{j:\mbox{ No edge in }\match_\ell\mbox{ is incident on }y_j\}
\]
denote the indices of the $N-K$ sequences in $\clS$ that are produced by sources in $\clN \setminus \clK$.
Similarly, let
\[
I_\mu^\ell = \{j:\mbox{ No edge in }\match_\ell\mbox{ is incident on }\mu_j\}.
\]
denote the indices of the $M-K$ sources in $\clM \setminus \clK$.
%
%
We continue to use $q^y_i$ to denote the probability distribution of the source that produced sequence $y_i$.
Let
\[
\wtilde \Lambda_\ell =   \{\underline y: D(\clH_\ell) \geq \wtilde \lambda\}, \ell \in [J].
\]
The probability of error of decision rule $\Lambda$ under hypothesis $\clH_\ell$ is given by
\[
P_\Lambda(\err/ \clH_\ell) = \Prob_{\clH_\ell}\left\{\underline y \in \displaystyle\bigcup_{\substack{k=1\\k \neq \ell}}^J\Lambda_k \right\} \leq \Prob_{\clH_\ell} (\wtilde \Lambda_\ell).
\]
We observe that $D(\clH_\ell)$ in the definition of $\wtilde \Lambda_\ell$ is a sum of the Kullback Leibler divergences between the empirical distributions of each $y_i$ and some $\mu_j$.
The empirical distributions of each $y_i$ can be interpreted as the marginal of a joint empirical distribution of $\underline y$ interpreted as a sequence of length $n$ drawn from $\zstate^N$.
We also note that $\wtilde \lambda \to \lambda$ and $n \to \infty$.
The result of \eqref{eqn:thmclaim11} follows directly by applying Sanov's theorem \cite{demzei98a} combined with the conclusion of Lemma~\ref{lem:KLofunion}.


For proving \eqref{eqn:thmclaim21} we observe that for any test based on empirical distributions, we have
\begin{eqnarray*}
2^{-\lambda n} &\geq& P_\Omega(\err/\clH_\ell) = \sum_{\cup_{k\neq \ell} \Omega_k}  \prod_{j=1}^N q_j^y(y_j)
\end{eqnarray*}
where we use $q_j^y(s)$ to denote the probability that sequence $s$ was generated i.i.d. under law $q_j^y$.
Simplifying further we have,
\begin{eqnarray*}
2^{-\lambda n} &\geq& \sum_{\cup_{k\neq \ell} \Omega_k}  \prod_{i \in I_y^\ell} q_i^y(y_i) \prod_{j \notin I_y^\ell} q_j^y(y_j) \nonumber \\
&\stackrel{(a)}{\geq}& \sum_{T \subset \cup_{k\neq \ell} \Omega_k} 
2^{-n \sum_{i \in I_y^\ell} \left(D(\Gamma_{y_i} \| q_i^y) + \delta(n)\right)  }\\
&& \qquad 2^{-n \sum_{j \notin I_y^\ell} \left(D(\Gamma_{y_j} \| q_j^y) + \delta(n)\right)  }\\
&\geq&
2^{-n \sum_{i \in I_y^\ell} \left(D(\Gamma_{y'_i} \| q_i^y) + \delta(n)\right)  } \\
&& \qquad 2^{-n \sum_{j \notin I_y^\ell} \left(D(\Gamma_{y'_j} \| q_j^y) + \delta(n)\right)  }
\end{eqnarray*}
where (a) follows from Lemma \ref{lem:typeprob} with $T = (T_{y_1},\ldots,T_{y_N})$ and $\delta(n) = \frac{|\zstate| \log(n+1)}{n}$, and $({y_1'},{y_2'},\ldots,{y_N'}) \in \cup_{k\neq \ell} \Omega_k$, and  all distributions in 
$\clN \setminus \clK \subset \pz$ are arbitrary.
If we specifically choose $\clN \setminus \clK $ such that $q_j^y = \Gamma_{y_j'}$ for all $j \in  I_y^\ell$ we get
\begin{eqnarray*}
\lambda &\leq& \sum_{j \notin I_y^\ell} \left( D(\Gamma_{y_{j}'} \| q_j^y) \right) + N\delta(n)
\end{eqnarray*}
which further implies that
\begin{eqnarray}
\cup_{j\neq \ell} \Omega_j \subset \wtilde \Lambda_\ell.\label{eqn:omincl1}
\end{eqnarray}
Now let
\[
\what \Lambda_\ell := \cap_{j \neq \ell} \wtilde \Lambda_j.
\]
Hence,
\[
\cup_\ell \Lambda_\ell = \{\underline y: D(\wtilde \clH) \geq \wtilde \lambda\} = \cup_\ell \what \Lambda_\ell.
\]
Combining with \eqref{eqn:omincl1} we get
\[
\what \Lambda_\ell = \cap_{j \neq \ell} \wtilde \Lambda_j \supset \cap_{j \neq \ell} \cup_{k\neq j} \Omega_k \supset \Omega_\ell
\]
and thus
\[
\Lambda_R^c = \cup_\ell \Lambda_\ell = \cup_\ell \what \Lambda_\ell \supset \cup_\ell \Omega_\ell = \Omega_R^c.
\]
Hence
\[
\Lambda_R \subset \Omega_R.
\]

\subsection{Proof of Lemma \ref{lem:typesuff}}
This proof is very similar to that of Lemma \ref{lem:typesuff1}.
Let $(x_1,x_2,\ldots,x_M,y_1,y_2,\ldots,y_N)$ be an arbitrary tuplet of sequences.
Let $T=(T_{x_1},\ldots,T_{x_M},T_{y_1},\ldots,T_{y_N})$ denote the joint type-class of all the sequences, defined similarly to the definition in the proof of Lemma \ref{lem:typesuff1} as
\begin{eqnarray*}
T &=& \left\{(w_1,\ldots,w_M,z_1,\ldots,z_N):  w_i, z_j \subset \zstate^n, \Gamma_{w_i} = \Gamma_{x_i} \right. \\
&&\qquad \left.  \mbox{ and }\Gamma_{z_j} = \Gamma_{y_j} \mbox{ for all } i \in [M], j \in [N]\right\}.
\end{eqnarray*}
Any $(x_1',x_2',\ldots,x_M',y_1',y_2',\ldots,y_N')\in T$ belongs to exactly one of the sets $\Omega_1,\Omega_2,\ldots,\Omega_J,\Omega_R$.
We modify the decision rule $\Omega$ as follows.
For any joint type $T$ we let $\Lambda_\ell$ include $T$ if $\Omega_\ell$ contains the most number of the sequences of $T$, for $\ell \in \{1,2,\ldots,J, R\}$.
In case of ties we break them arbitrarily and include $T$ in exactly one of the $\Lambda_\ell$'s.


Under hypothesis $\clH_\ell$, let $q^x_i$ denote the probability distribution of the source that produced sequence $x_i$ and $q^y_i$ denote the probability distribution of the source that produced sequence $y_i$.
For any hypothesis $\ell \in [J]$ and any joint type $T \subset \Lambda_k$ with $k \in [J] \cup \{R\}$ we have by Lemma \ref{lem:typeprob} and definition of $\Lambda_\ell$:
\begin{eqnarray*}
\Prob_{\clH_\ell} \{\Omega_k\} &\geq& \Prob_{\clH_\ell} \{\Omega_k \cap T\} \geq \frac{1}{J+1}\Prob_{\clH_\ell} \{T\} \\
&\geq& \frac{2^{-n \left(\delta(n)+\sum_{i=1}^M D(\Gamma_{x_i} \| q_i^x)+ \sum_{j=1}^N D(\Gamma_{y_j} \| q_j^y)\right)}}{J+1}
\end{eqnarray*}
where $\delta(n) = \frac{(M+N) |\zstate| \log(n+1)}{n}$.
Combining the above result along with the definition of $\Lambda_k$ and Lemma \ref{lem:typeprob}, we have
\begin{eqnarray*}
\Prob_{\clH_\ell} \{\Lambda_k\} &=& \sum_{T \subset \Lambda_k}\Prob_{\clH_\ell} \{T\}\\
&\leq& \sum_{T \subset \Lambda_k} 2^{-n \left(\sum_{i=1}^M D(\Gamma_{x_i} \| q_i^x)+ \sum_{j=1}^N D(\Gamma_{y_j} \| q_j^y)\right)}\\
&\leq& \sum_{T \subset \Lambda_k} 2^{n \delta(n)} (J+1) \Prob_{\clH_\ell} \{\Omega_k\}\\
&\leq& \tau_n 2^{ n \delta(n)}(J+1)\Prob_{\clH_\ell} \{\Omega_k\}
\end{eqnarray*}
where $\tau_n$ represents the number of joint types of length $n$.
Since $\frac{\log \tau_n}{n} \to 0$ \cite{covtho06} and $\delta(n) \to 0$ the results follow by choosing $k \in [J]$ and $k = R$.
\subsection{Proof of Lemma~\ref{lem:GLRunknown}}
We know that there are $M-K$ sequences in $\clS_1$ and $N-K$ sequences in $\clS_2$ that are not produced by sources in $\clK$.
We represent the indices of these sequences under hypothesis $\clH_\ell$ by the following notation
\begin{equation}
I_x^\ell = \{j:\mbox{ No edge in }\match_\ell\mbox{ is incident on }x_j\}\label{eqn:ixl}
\end{equation}
and
\begin{equation}
I_y^\ell = \{j:\mbox{ No edge in }\match_\ell\mbox{ is incident on }y_j\}\label{eqn:iyl}.
\end{equation}
Furthermore, we let $q^x_i$ denote the probability distribution of the source that produced sequence $x_i$ and $q^y_i$ denote the probability distribution of the source that produced sequence $y_i$.

We first observe that if $x, y \in \zstate^n$ are two length $n$ strings drawn under the same distribution from $\pz$, then the maximum likelihood distribution that produced it is given by $\half(\Gamma_{x} + \Gamma_{y})$, the empirical distribution of the concatenated string.
In other words
\begin{eqnarray}
\argmax_{\mu \in \pz} \mu(x) \mu(y) 
&=& \half(\Gamma_{x} + \Gamma_{y})\label{eqn:MLempirical}
\end{eqnarray}


\noindent Now we have
\begin{eqnarray*}
\lefteqn{  \log \Prob_{\clH_\ell} (x_1, x_2, \ldots, x_M, y_1, y_2, \ldots, y_N)}\\
&=&  \sum_{(i,j) \in \match_\ell} \sum_{z \in \zstate} \log {(q^x_i(z))^{{{n}(\Gamma_{x_i}(z) + \Gamma_{y_j}(z)) }}} \\
&& \qquad +\sum_{ i \in I_x^\ell} \sum_{z \in \zstate} \log {(q^x_i(z))^{n \Gamma_{x_i}(z) }  } \\
&&\qquad +\sum_{ j \in I_y^\ell} \sum_{z \in \zstate} \log {(q^y_j(z))^{ n \Gamma_{y_j}(z) }}
\end{eqnarray*}
By \eqref{eqn:MLempirical}
\begin{eqnarray*}
\lefteqn{\max_{\substack{\clM, \clN \subset \pz\\|\clM \cap \clN| = K} } \log \Prob_{\clH_\ell} (x_1, x_2, \ldots, x_M, y_1, y_2, \ldots, y_N)}\\
&=& \sum_{(i,j) \in \match_\ell} \sum_{z \in \zstate} \log {\left(\frac{\Gamma_{x_i}(z) + \Gamma_{y_j}(z)}{2}\right)^{{n}(\Gamma_{x_i}(z) + \Gamma_{y_j}(z)) }}\\
&&\qquad +\sum_{ i \in I_x^\ell}  \sum_{z \in \zstate} \log {(\Gamma_{x_i}(z) )^{n \Gamma_{x_i}(z) }}\\
&& \qquad  +\sum_{ j \in I_y^\ell}  \sum_{z \in \zstate} \log {(\Gamma_{y_j}(z) )^{n \Gamma_{y_j}(z) }}\\
\end{eqnarray*}
Hence
\begin{eqnarray*}
\lefteqn{\argmax_{\ell \in [J]}\max_{\substack{\clM, \clN \subset \pz\\|\clM \cap \clN| = K} } \log \Prob_{\clH_\ell} (x_1, \ldots, x_M, y_1, \ldots, y_N)}\\
&=& \argmin_{\ell \in [J]} \sum_{(i,j) \in \match_\ell} \left\{ D(\Gamma_{x_i} \| \half(\Gamma_{x_i} + \Gamma_{y_j})) \right.\\
&&\qquad \left.+ D(\Gamma_{y_j} \| \half(\Gamma_{x_i} + \Gamma_{y_j}))\right\} \\
&&\qquad - \sum_{i \in [M]} H(\Gamma_{x_i}) - \sum_{j \in [N]} H(\Gamma_{y_j}) \\
&{=}& \argmin_{\ell \in [J]} \sum_{(i,j) \in \match_\ell} \left\{D(\Gamma_{x_i} \| \half(\Gamma_{x_i} + \Gamma_{y_j})) \right.\\
&&\qquad \left.+ D(\Gamma_{y_j} \| \half(\Gamma_{x_i} + \Gamma_{y_j}))\right\}
\end{eqnarray*}
where the last step follows from the fact that the sum of the entropy terms is equal for all hypotheses.
The conclusion of \eqref{eqn:HhatML2} follows directly, and that of \eqref{eqn:HtildeML2} by a similar argument.

\subsection{Proof of Theorem \ref{thm:opt}}
We use the notation of $I_x^\ell$ and $I_y^\ell$ introduced in \eqref{eqn:ixl} and \eqref{eqn:iyl}.
Furthermore, as before let $q^x_i$ (respectively $q^y_i$) denote the probability distribution of the source that produced sequence $x_i$ ($y_i$).

This proof is very similar to that of Theorem \ref{thm:opt1}.
Define
\[
\wtilde \Lambda_\ell =   \{(\underline x, \underline y): D(\clH_\ell) \geq \wtilde \lambda\}, \ell \in [J].
\]
Clearly,
\[
\Lambda_j \subset \wtilde \Lambda_\ell \mbox{ for all }j \neq \ell
\]
and hence
\[
\cup_{j\neq \ell} \Lambda_j \subset\cup_{j\neq \ell} \left(\cap_{k \neq j}\wtilde \Lambda_k \right) \subset \wtilde \Lambda_\ell.
\]
Therefore,
\begin{eqnarray}
P_\Lambda(\err/\clH_\ell) &=& \sum_{\cup_{k\neq \ell} \Lambda_k } \prod_{i \in I_x^\ell} q_i^x(x_i)  \prod_{j \in I_y^\ell} q_j^y(y_j) \nonumber \\
&& \qquad  \prod_{e \in \match_\ell} q_{e_1}^x(x_{e_1}) q_{e_2}^y(y_{e_2}) \nonumber \\
&\stackrel{(a)}{\leq}&  \sum_{\wtilde \Lambda_\ell} \prod_{i \in I_x^\ell} q_i^x(x_i) \prod_{j \in I_y^\ell} q_j^y(y_j) \nonumber \\
&& \qquad\prod_{e \in \match_\ell} q_{e_1}^x(x_{e_1}) q_{e_1}^x(y_{e_2}) \label{eqn:eqncrit}
\end{eqnarray}
where (a) follows from the fact that under $\clH_\ell$ for all $e \in \match_\ell$ we have $q_{e_1}^x = q_{e_2}^y$.
If $\match_\ell = \{e^1,e^2,\ldots,e^K\}$, let
\[
\overline \Lambda_\ell := \{({x_{e_1^1}},{x_{e_1^2}},\ldots,{x_{e_1^K}},{y_{e_2^1}},{y_{e_2^2}},\ldots,{y_{e_2^K}}): (\underline x, \underline y) \in \wtilde \Lambda_\ell\}.
\]
From the definition of $\wtilde \Lambda_\ell$ it is evident that for a fixed matching $\match_\ell$, the membership of $(\underline x, \underline y)$ in $\wtilde \Lambda_\ell$ depends only on $\overline \Lambda_\ell$.
Hence \eqref{eqn:eqncrit} becomes:
\begin{eqnarray*}
P_\Lambda(\err/\clH_\ell) &\leq&  \sum_{\overline \Lambda_\ell} \prod_{e \in \match_\ell} q_{e_1}^x(x_{e_1}) q_{e_1}^x(y_{e_2})  \\
&\stackrel{(b)}{\leq}& \sum_{\overline \Lambda_\ell} \prod_{e \in \match_\ell} 2^{-2n H(\half({\Gamma_{x_{e_1}} + \Gamma_{y_{e_2}}}))} \\
&=& \sum_{\overline \Lambda_\ell} 2^{-2n \sum_{e \in \match_\ell} H(\half({\Gamma_{x_{e_1}} + \Gamma_{y_{e_2}}}))} 
\end{eqnarray*}
where (b) follows from Lemma~\ref{lem:probandentropy}.
Note that
\begin{eqnarray*}
2H(\half({\Gamma_{x_{e_1}} + \Gamma_{y_{e_2}}})) &=& H(\Gamma_{x_{e_1}}) + H(\Gamma_{y_{e_2}}) \\
&&+ D(\Gamma_{x_{e_1}} \| \half({\Gamma_{x_{e_1}} + \Gamma_{y_{e_2}}}) )\\
&&+ D(\Gamma_{y_{e_2}} \| \half({\Gamma_{x_{e_1}} + \Gamma_{y_{e_2}}}))
\end{eqnarray*}
Thus 
\begin{eqnarray*}
P_\Lambda(\err/\clH_\ell) 
&\leq& \sum_{\overline \Lambda_\ell} 2^{-n \wtilde \lambda -n \sum_{e \in \match_\ell} \left(H(\Gamma_{x_{e_1}}) + H(\Gamma_{y_{e_2}})  \right)} \\
&\stackrel{(c)}{\leq}& 2^{-n \wtilde \lambda} \left((n+1)^{|\zstate|}\right)^{2 |\match_\ell| } \\
&=& 2^{-n \wtilde \lambda} \left(n+1\right)^{2K|\zstate|} \\
&\leq& 2^{-n (\lambda + O(\frac{\log n}{n}))}
\end{eqnarray*}
where (c) follows from Lemma \ref{lem:sumexpentropy}.
This proves \eqref{eqn:thmclaim1}.
This proof can be interpreted as an extension of Lemma~\ref{lem:ldptwoemp} to $K$ pairs of empirical distributions.

For proving \eqref{eqn:thmclaim2} we observe that for any test based on empirical distributions, we have
\begin{eqnarray*}
2^{-\lambda n} &\geq& P_\Omega(\err/\clH_\ell) \\
&=& \sum_{\cup_{k\neq \ell} \Omega_k}  \prod_{i=1}^M q_i^x(x_i) \prod_{j=1}^N q_j^y(y_j)\\
&=& \sum_{\cup_{k\neq \ell} \Omega_k}  \prod_{i \in I_x^\ell} q_i^x(x_i) \prod_{j \in I_y^\ell} q_j^y(y_j) \nonumber \\
&& \qquad\prod_{e \in \match_\ell} q_{e_1}^x(x_{e_1}) q_{e_1}^x(y_{e_2})  \nonumber \\
&\stackrel{(a)}{\geq}&  \sum_{T \subset \cup_{k\neq \ell} \Omega_k} 2^{-n \sum_{i \in I_x^\ell} \left(D(\Gamma_{x_i} \| q_i^x) + \delta(n)\right)  }\\
&&2^{-n \sum_{j \in I_y^\ell} \left(D(\Gamma_{y_j} \| q_j^y) + \delta(n)\right)  }\\
&&2^{-n \sum_{e \in \match_\ell} \left( D(\Gamma_{x_{e_1}} \| q_{e_1}^x) + D(\Gamma_{y_{e_2}} \| q_{e_1}^x) +2\delta(n)\right)}\\
&\geq& 2^{-n \sum_{i \in I_x^\ell} \left(D(\Gamma_{x_i'} \| q_i^x) + \delta(n)\right)  }\\
&&2^{-n \sum_{j \in I_y^\ell} \left(D(\Gamma_{y_j'} \| q_j^y) + \delta(n)\right)  }\\
&&2^{-n \sum_{e \in \match_\ell} \left( D(\Gamma_{x_{e_1}'} \| q_{e_1}^x) + D(\Gamma_{y_{e_2}'} \| q_{e_1}^x) +2\delta(n)\right)}
\end{eqnarray*}
where (a) follows from Lemma \ref{lem:typeprob} with $T = (T_{x_1},\ldots,T_{x_M},T_{y_1},\ldots,T_{y_N})$ and $\delta(n) = \frac{|\zstate| \log(n+1)}{n}$, and $({x_1'},{x_2'},\ldots,{x_M'},{y_1'},{y_2'},\ldots,{y_N'}) \in \cup_{k\neq \ell} \Omega_k$ and  all distributions in $\clU \subset \pz$ are arbitrary.
If we specifically choose $\clU$ such that $ q_{e_1}^x = \half (\Gamma_{x_{e_1}'} + \Gamma_{y_{e_2}'})$ for all $e \in \match_\ell$, and $q_i^x = \Gamma_{x_i'}$ for all $i \in  I_x^\ell$ and $q_j^y = \Gamma_{y_j'}$ for all $j \in  I_y^\ell$ we get
\begin{eqnarray*}
\lambda &\leq& \sum_{e \in \match_\ell} \left( D(\Gamma_{x_{e_1}'} \| \half (\Gamma_{x_{e_1}'} + \Gamma_{y_{e_2}'})) \right.\nonumber\\
&& \left.+   D(\Gamma_{y_{e_2}'} \| \half (\Gamma_{x_{e_1}'} + \Gamma_{y_{e_2}'})) \right) + (M+N)\delta(n)
\end{eqnarray*}
which further implies that
\begin{eqnarray}
\cup_{j\neq \ell} \Omega_j \subset \wtilde \Lambda_\ell.\label{eqn:omincl}
\end{eqnarray}
Now let
\[
\what \Lambda_\ell := \cap_{j \neq \ell} \wtilde \Lambda_j.
\]
Hence,
\[
\cup_\ell \Lambda_\ell = \{(\underline x, \underline y): D(\wtilde \clH) \geq \wtilde \lambda\} = \cup_\ell \what \Lambda_\ell.
\]
Combining with \eqref{eqn:omincl} we get
\[
\what \Lambda_\ell = \cap_{j \neq \ell} \wtilde \Lambda_j \supset \cap_{j \neq \ell} \cup_{k\neq j} \Omega_k \supset \Omega_\ell
\]
and thus
\[
\Lambda_R^c = \cup_\ell \Lambda_\ell = \cup_\ell \what \Lambda_\ell \supset \cup_\ell \Omega_\ell = \Omega_R^c.
\]
Hence
\[
\Lambda_R \subset \Omega_R.
\]
\bibliographystyle{IEEEtran}
\bibliography{privrefs}
%
%

\end{document}